\providecommand{\tabularnewline}{\\}
\theoremstyle{plain}
\newtheorem{thm}{\protect\theoremname}
\providecommand{\theoremname}{Theorem}
\begin{document}
\begin{frontmatter}
\title{\textbf{Characterisation of conserved and reacting moieties in chemical
reaction networks}}
\author[1,2]{Hadjar Rahou}
\author{Hulda S. Haraldsdóttir}
\author[1,2]{Filippo Martinelli}
\author[1,2,3,4]{Ines Thiele} 
%\vspace{12pt} 
%\newline
\author[1,2]{Ronan~M.T.~Fleming\corref{corresponding}}

\cortext[corresponding]{Corresponding author. Email: \href{mailto:ronan.mt.fleming@gmail.com}{ronan.mt.fleming@gmail.com}}

\address[1]{School of Medicine, University of Galway, Ireland.}
\address[2]{Digital Metabolic Twin Centre, University of Galway, Ireland.}
\address[3]{School of Microbiology, University of Galway, Galway, Ireland.}
\address[4]{APC Microbiome Ireland, Cork, Ireland.}

%\author{Hadjar Rahou\textsuperscript{1,2}\textsubscript{}, Hulda S. Haraldsdóttir\textsuperscript{}\textsubscript{},
%Filippo Martinelli\textsuperscript{1,2}, Ines Thiele\textsuperscript{1,2,3,4},
%\\
%Ronan M. T. Fleming\textsuperscript{1,2}\textsubscript{\textsubscript{\textsuperscript{\textsubscript{\textsuperscript{}}}}}\thanks{Correspondence: \protect\href{mailto:ronan.mt.fleming@gmail.com}{ronan.mt.fleming@gmail.com}}
%Ronan M. T. Fleming\corref{corresponding}
%\cortext[corresponding]{Correspondence: \href{mailto:ronan.mt.fleming@gmail.com}{ronan.mt.fleming@gmail.com}}
%}

%\address{\noindent 1. School of Medicine, University of Galway, Ireland.\\
%2. Digital Metabolic Twin Centre, University of Galway, Galway, Ireland.\\
%3. School of Microbiology, University of Galway, Galway, Ireland.\\
%4. APC Microbiome Ireland, Cork, Ireland.}

\begin{abstract}
A detailed understanding of biochemical networks at the molecular
level is essential for studying complex cellular processes. In this
paper, we provide a comprehensive description of biochemical networks
by considering individual atoms and chemical bonds. To address combinatorial
complexity, we introduce a well-established approach to group similar
types of information within biochemical networks. A conserved moiety
is a set of atoms whose association is invariant across all reactions
in a network. A reacting moiety is a set of bonds that are either
broken, formed, or undergo a change in bond order in at least one
reaction in the network. By mathematically identifying these moieties,
we establish the biological significance of conserved and reacting
moieties according to the mathematical properties of the stoichiometric
matrix. We also present a novel decomposition of the stoichiometric
matrix based on conserved moieties. This approach bridges the gap
between graph theory, linear algebra, and biological interpretation,
thus opening up new horizons in the study of chemical reaction networks.

\vspace{0.2cm}

Keywords: Conserved moiety, hypergraph, mathematical modelling, reacting
moiety, reaction network, stoichiometric matrix.

\vspace{0.2cm}

\end{abstract}
\end{frontmatter}

\section{Introduction}

\paragraph{Mathematical representation of biochemical networks}

Mathematical analysis of biochemical networks enables one to identify
novel characteristics of biochemical networks and define biological
concepts in terms of mathematical objects. One approach that enables
this study is to represent the stoichiometry of a biochemical network
by a stoichiometric matrix. A \textit{stoichiometric matrix} is a
rectangular matrix where each row represents a molecular species and
each column represents a reaction. Typically there are more reactions
than molecular species. Each entry in a stoichiometric matrix is given
by the integer stoichiometric coefficient of a molecular species in
a reaction, which is negative if a molecular species is a substrate
and positive if a molecular species is a product in that reaction.
A biochemical network can be represented as a hypergraph. In a hypergraph,
each vertex represents a biochemical species, while each hyperedge
represents a reaction that connects multiple species. Unlike a simple
graph, where edges connect only two nodes, hyperedges may connect
more than two nodes, reflecting the complex interactions in biochemical
reactions involving multiple reactants and products. 

\paragraph{Stoichiometric matrix \& molecular topology }

Molecular topology only considers the connectivity of atoms (i.e.,
which atoms are bonded to which) and it does not inherently capture
spatial arrangements like stereochemistry. From a stoichiometric matrix
alone, one cannot derive the molecular topology of each species in
the underlying biochemical network. This statement is obvious, but
it implies that one cannot obtain a biochemically faithful mathematical
representation of a biochemical network without incorporating a representation
of molecular topology into ones mathematical analysis of a biochemical
network. Previously, we demonstrated that incorporation of molecular
species topology in the form of a graph, where each vertex is an atom
of a specific element and each edge is a bond between atoms, enables
identification of a set of conserved moiety vectors \cite{haraldsdottir_Identification_2016},
each of which is interpretable in terms of a structurally defined
conserved moiety. Subsequently, we demonstrated that by considering
species topology a stoichiometric matrix may be split into the sum
of $m-\textrm{rank}(N)$ moiety transition matrices, each of which
corresponds to a subnetwork corresponding to a structurally identifiable
conserved moiety. 

Identification of conserved moieties provides detailed information
about invariant sets of atoms in a metabolic network, but it does
not directly consider bonds between atoms. In a chemical reaction,
typically, only a few atoms directly participate in broken or formed
bonds. In the literature, different terms are used for the part of
a molecular species that changes in a chemical reaction. The reaction
centre is defined as the atoms and bonds that are directly involved
in the bond and electron rearrangement of a reaction \cite{chen_automatic_2013}.
Elsewhere the reaction site is defined as a subtopology that includes
the reaction centre \cite{funatsu_automatic_1988}. There are several
different approaches to finding reaction centres. These include computational
methods such as molecular dynamics simulations \cite{hollingsworth_molecular_2018}.
From our perspective, a weakness of existing approaches is that reaction
centres are defined heuristically or computationally in a manner that
does not admit an unambiguous mathematical interpretation.

Current computational methods to identify reaction centres are primarily
based on identifying the maximum common subtopology between a molecular
species and its product pair. While these methods are useful, they
are typically designed to handle specific biochemical reactions and
are not well-suited for genome-scale models. Similarly, automatic
identification of reaction rules involves analysing large databases
of chemical reactions to identify patterns in how different functional
groups react. However, this approach may not fully capture the complexity
of underlying chemical networks, particularly in large-scale systems.
This highlights the need for a new method capable of handling genome-scale
models with greater accuracy and scalability.

\paragraph{Aims and Outline}

Herein, we deepen our investigation of the intersection between stoichiometric
matrices, molecular topology, and graph theory, considering both the
atoms and bonds involved in each reaction. In descriptive terms, a
\textit{conserved moiety} is a set of atoms that remains intact in
a reaction network, while a \textit{reacting moiety} is a set of reacting
chemical bonds between a pair of conserved moiety instances that dissociate
in at least one reaction of a reaction network. We mathematically
define conserved and reacting moieties in terms of invariant and variant
subsets of an atom transition graph, where each vertex corresponds
to an atom whose transition from substrate to product either corresponds
to an unbroken or broken bond in a reaction. Furthermore, we present
a novel and efficient algorithm to identify conserved and reacting
moieties given a stoichiometric matrix, a \textit{molecular graph}
for each molecular species, and an atom mapping for each reaction.

Moreover, we tackle the challenge of complexity reduction in biochemical
networks by proposing a novel decomposition of the stoichiometric
matrix in terms of conserved moieties. This moiety decomposition is
a simplification that reflects the participation of every molecular
species in every reaction in which it participates in within a given
biochemical network. We also discuss future directions and potential
challenges, including expanding this moiety decomposition to larger
networks. These contributions offer valuable insights into the functional
aspects and network topology of biochemical systems, advancing our
understanding of complex biological processes. The mathematical results
are illustrated using a toy example reaction network introduced previously
in \cite{ghaderi_structural_2020}.\medskip{}

All symbols used in this paper are outlined in Table \ref{tab:Graph-theory-notation},
which includes notations relevant to graph theory, Table \ref{tab:Matrix-notations-for},
detailing matrices used throughout the paper, and Table \ref{tab:Variable-notations-for},
listing variables and counts (See Section \ref{sec:Notation-tables}
in the supplemental material).

\section{Mathematical Foundations}

We first introduce the essential mathematical concepts that form the
foundation of the main results of this work.

\subsection{Graph and hypergraph}

A graph is a set of vertices and a set of edges, where each edge connects
exactly two distinct vertices. In contrast, a hypergraph generalises
this concept by allowing hyperedges to connect any number of vertices,
making it a suitable representation for complex biochemical reactions
involving multiple reactants and products simultaneously\cite{klamt_hypergraphs_2009}. 

\subsection{\label{subsec:Graph-isomorphism-and}Graph isomorphism and isomorphism
classes}

An isomorphism of graphs $\mathcal{G}_{A}$ and $\mathcal{G}_{B}$
is a bijection between the vertex sets of $\mathcal{G}_{A}$ and $\mathcal{G}_{B}$
denoted $f:\mathcal{X}(\mathcal{G}_{A})\rightarrow\mathcal{X}(\mathcal{G}_{B})$
such that any two vertices $\mathcal{X}_{i}$ and $\mathcal{X}_{j}$
of $\mathcal{G}_{A}$ are adjacent in $\mathcal{G}_{A}$ if and only
if $f(\mathcal{X}_{i})$ and $f(\mathcal{X}_{j})$ are adjacent in
$\mathcal{G}_{B}$. That is, there exists a permutation matrix $P$
such that $A=PBP^{T}$, where $A$ and $B$ denote the incidence matrices
representing the graphs $\mathcal{G}_{A}$ and $\mathcal{G}_{B}$
respectively. A label-preserving graph isomorphism occurs when two
graphs are permutationally equivalent, as above, and the labels on
the vertices are preserved. A \textit{graph isomorphism class }is
a set of graphs that are all isomorphic to each other. A maximal subgraph
isomorphism class of a graph is a maximal set of pairwise isomorphic
connected components.

\subsection{Graph splitting}
\begin{thm}
\label{thm:splittingN} Let $A\in\left\{ -1,0,1\right\} ^{p\times q}$
be an incidence matrix for a graph $\mathcal{A}(\mathcal{X},\mathcal{\mathcal{E}},\mathcal{H})$.
Let $C\in\{0,1\}^{c\times p}$ be a mapping between connected components
and vertices in a graph, where $C_{i,j}=1$ if connected component
$i$ contains vertex $j$ and $C_{i,j}=0$ otherwise, then $c=p-\textrm{rank}(A)$,
$CA=0$ and the following matrix splitting exists
\begin{equation}
A=\mathrm{diag^{-1}}\left(C^{T}\mathbbm{1}\right)\sum_{i=1}^{c}A(i),\label{eq:Asplit}
\end{equation}
where $A(i)\in\left\{ -1,0,1\right\} ^{p\times q}$ is an incidence
matrix for the $i^{th}$ connected component of $\mathcal{G}(\mathcal{X},\mathcal{\mathcal{E}},\mathcal{H})$,
given by
\begin{equation}
A(i)\coloneqq\textrm{diag}(C_{i,:})A\label{def:A(k)}
\end{equation}
\end{thm}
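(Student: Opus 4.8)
The plan is to pin down the left null space of $A$ explicitly, read off the first two assertions ($CA=0$ and $c=p-\textrm{rank}(A)$) from it, and then unwind the definition \eqref{def:A(k)} of $A(i)$ to obtain the splitting \eqref{eq:Asplit}. The one structural fact I will use repeatedly is the defining shape of a graph incidence matrix: column $k$ of $A$ carries a single $+1$ and a single $-1$, in the rows indexed by the two endpoints of edge $k$, and is zero in every other row; in particular the support of each column lies inside a single connected component of $\mathcal{A}$.

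First I would show that the rows of $C$ form a basis of the left null space $\{y\in\mathbb{R}^{p}:y^{T}A=0\}$. By the column structure just recalled, $y^{T}A=0$ is equivalent to $y_{u}=y_{v}$ for every edge $\{u,v\}$, i.e.\ to $y$ being constant on each connected component. Each row $C_{i,:}$ is the indicator vector of component $i$ and hence has this property, which gives $CA=0$; conversely any vector that is constant on components is the obvious linear combination of these indicators. Since the $C_{i,:}$ have pairwise disjoint supports they are linearly independent, so they are a basis of the left null space, and rank--nullity applied to $A^{T}$ gives $c=\dim\ker A^{T}=p-\textrm{rank}(A)$.

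It remains to treat the splitting. Left multiplication by $\textrm{diag}(C_{i,:})$ retains the rows of $A$ indexed by vertices of component $i$ and zeros the rest; because each column of $A$ is supported on the two endpoints of an edge and both lie in the same component, that column is left unchanged when the component is $i$ and is sent to zero otherwise. Hence $A(i)$ is the incidence matrix of the $i^{\textrm{th}}$ component extended by zero columns, which in particular makes $A(i)\in\{-1,0,1\}^{p\times q}$ evident. Summing over components, $\sum_{i=1}^{c}A(i)=\bigl(\sum_{i=1}^{c}\textrm{diag}(C_{i,:})\bigr)A=\textrm{diag}(C^{T}\mathbbm{1})\,A$, using $\sum_{i=1}^{c}C_{i,:}=(C^{T}\mathbbm{1})^{T}$; and since every vertex lies in exactly one connected component we have $C^{T}\mathbbm{1}=\mathbbm{1}$, so $\textrm{diag}(C^{T}\mathbbm{1})$ is invertible and left multiplication by its inverse yields \eqref{eq:Asplit}. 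The only point that needs care --- and the closest thing to an obstacle --- is the bookkeeping observation that every column of $A$ has its support confined to one connected component; once that is in hand, both $CA=0$ and the column-wise behaviour of $A(i)$ are immediate, and the remainder is routine linear algebra.
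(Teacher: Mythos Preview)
Your proof is correct and follows essentially the same route as the paper's. The only notable difference is that the paper cites $CA=0$ and $c=p-\textrm{rank}(A)$ as standard algebraic graph theory (Theorem~2.5 of \cite{grossman_minors_1995}), whereas you supply the argument directly by characterising the left null space of $A$; for the splitting itself both you and the paper reduce to the identity $\sum_{i}\textrm{diag}(C_{i,:})=\textrm{diag}(C^{T}\mathbbm{1})$ and the invertibility of that diagonal matrix. You additionally verify explicitly that $A(i)$ is the incidence matrix of component $i$ and sharpen the paper's $C^{T}\mathbbm{1}>0$ to $C^{T}\mathbbm{1}=\mathbbm{1}$, both of which are correct.
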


\begin{proof}
That $c=p-\textrm{rank}(A)$ and $CA=0$ are standard results from
algebraic graph theory (Theorem 2.5 \cite{grossman_minors_1995}).
Substituting (\ref{eq:Asplit}) into (\ref{def:A(k)}), it is enough
to show $C^{T}\mathbbm{1}\in\mathbb{Z}_{++}^{m}$ and that 
\[
\textrm{diag}\left(C^{T}\mathbbm{1}\right)=\sum_{i=1}^{c}\textrm{diag}(C_{i,:}).
\]
The expression on the left sums each row of $C$ then places it on
the diagonal of an $p\times p$ matrix. The expression on the right
places each row of $C$ on the diagonal of a matrix, and sums the
matrices, which is equivalent to the expression on the left as the
operations involved are commutative. Each entry of $C$ is non-negative
so $C^{T}\mathbbm{1}\ge0,$ therefore it remains to show that $C^{T}\mathbbm{1}\in\mathbb{Z}_{++}^{m}$.
Every atom is part of one connected component, so $C^{T}\mathbbm{1}>0$,
giving the desired result.
\end{proof}

\subsection{\label{subsec:Graph-condensation}Graph condensation }

Graph condensation is a process that reduces a graph by merging a
set of vertices into a single vertex based on certain criteria, typically
to simplify the analysis of complex networks \cite{balcer_computing_1973}.
Given an undirected graph $\mathcal{G}$, its condensation, denoted
$\mathcal{G}^{c}$, is obtained by contracting each connected component
of $\mathcal{G}$ into a single vertex. Each vertex of $\mathcal{G}^{c}$
corresponds to a connected component of the original graph.

\subsection{Set cover problem}

Given a set of elements $\mathcal{E}=\{e_{1},e_{2},\ldots,e_{n}\}$
and a set of $m$ subsets of that set, ${\color{black}\mathcal{S}(\mathcal{E})=\{\mathcal{S}_{1},\mathcal{S}_{2},\ldots,\mathcal{S}_{{\color{violet}{\color{black}m}}}}\}$,
the set cover problem is to find a minimal collection $\mathcal{C}$
of sets from $\mathcal{S}$ such that $\mathcal{C}$ covers all elements
in $\mathcal{E}$. That is $\cup_{\mathcal{S}_{i}\in\mathcal{C}}\mathcal{S}_{i}=\mathcal{E}$.
The set cover problem is a classic NP-hard problem where the objective
is to cover a universal set $\mathcal{E}$ with the smallest number
of subsets from a collection $\mathcal{S}$ \cite{vazirani_approximation_2003}\cite{feige_threshold_1998}.
Due to its complexity, several algorithms are used to find feasible
solutions. The greedy algorithm \cite{chvatal_greedy_1979} is widely
applied due to its simplicity and effectiveness. It iteratively selects
the subset that covers the largest number of uncovered elements, achieving
a near-optimal approximation ratio of $\ln|\mathcal{E}|$, which is
among the best possible for polynomial-time solutions. A linear programming
(LP) relaxation offers another efficient approach by solving a fractional
version of the problem. The fractional solution is then converted
to integer form using rounding techniques like randomised rounding,
or threshold rounding, allowing for flexibility in handling weighted
instances while maintaining solution quality. Primal-dual algorithms
construct solutions by simultaneously adjusting primal and dual variables,
yielding good approximations with efficiency suited to large-scale
or dynamically evolving problems. For particularly large or complex
instances, metaheuristics such as genetic algorithms and simulated
annealing provide flexible, high-quality solutions without guaranteeing
optimality, making them useful for problem-specific constraints and
large datasets.

\section{Hypergraph and graph representations of a metabolic network}

The following section introduces hypergraph and graph representations
of a metabolic network, which are the foundation for mathematical
analysis of metabolism.

\subsection{\label{subsec:Directed-stoichiometric-hypergra}Directed stoichiometric
hypergraph}

A metabolic network is represented by a directed stoichiometric hypergraph
$\mathcal{H}(\mathcal{V},\mathcal{Y}(\mathcal{S},\mathcal{P}))$,
which is an oriented hypergraph that consists of a sequence of $m$
vertices $\mathcal{V}:=(\mathcal{V}_{1},\ldots,\mathcal{V}_{m})$,
and a sequence of $n$ directed hyperedges $\mathcal{Y}:=(\mathcal{Y}_{1},\ldots,\mathcal{\mathcal{Y}}_{n})$.
In the $j^{th}$ reaction $\mathcal{Y}_{j}\coloneqq(\mathcal{S}_{j},\mathcal{P}_{j})$
the substrate (arrow tail) complex is
\[
\mathcal{S}_{j}\coloneqq\sum_{i=1}^{m}F_{i,j}\mathcal{V}_{i}
\]
and the product (arrow head) complex is
\[
\mathcal{P}_{j}\coloneqq\sum_{i=1}^{m}R_{i,j}\mathcal{V}_{i}
\]
where $F\in\mathbb{Z}_{+}^{m\times n}$ is a forward stoichiometric
matrix, $R\in\mathbb{Z}_{+}^{m\times n}$ is a reverse stoichiometric
matrix, with $\mathcal{F}$ and $\mathcal{R}$ being two sequences
of cardinality $n$. The entry $F_{i,j}$ is the stoichiometric number
of molecular species $i$ consumed in the $j^{th}$ directed reaction,
and the entry $R_{i,j}$ is the stoichiometric number of molecular
species $i$ produced in the $j^{th}$ directed reaction. One may
then define a net stoichiometric matrix as $N\coloneqq R-F\in\mathbb{Z}^{m\times n}$.
Note that the definition of a net stoichiometric matrix in terms of
forward and reverse stoichiometric matrices allows for a molecular
species, for example, an enzyme catalyst, to be both consumed and
produced in a reaction, in which case the corresponding net stoichiometric
coefficient is zero. However, henceforth, we do not consider a catalyst
in reactions. Note that the sequence of vertices and hyperedges is
arbitrary, but once these sequences are defined, they must be used
consistently in different theoretical representations.

\subsubsection{\label{subsec:Example-directed-stoichiometric}Example directed stoichiometric
hypergraph}

\begin{figure}
\begin{centering}
\includegraphics[width=8cm]{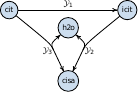}
\par\end{centering}
\caption{\label{fig:mitocell}\textbf{A directed stoichiometric hypergraph.
}The four molecular species (vertices) are citrate (\protect\href{https://vmh.life/\#reaction/cit}{cit},
$C_{6}H_{5}O_{7}$), isocitrate (\protect\href{https://www.vmh.life/\#metabolite/icit}{icit},
$C_{6}H_{5}O_{7}$) , cis-aconitate (\protect\href{https://vmh.life/\#metabolite/HC00342}{cisa},
$C_{6}H_{3}O_{6}$) and water (\protect\href{https://vmh.life/\#metabolite/h2o}{h2o},
$H_{2}O$). In biochemical terms, the reactions (black hyperedges)
are $\mathcal{Y}_{1}$: aconitate hydratase (\protect\href{https://www.vmh.life/\#reaction/ACONTm}{ACONTm}),
$\mathcal{Y}_{2}$: citrate hydro-lyase (\protect\href{https://vmh.life/\#reaction/r0317}{r0317})
and $\mathcal{Y}_{3}$: isocitrate hydro-lyase (\protect\href{https://vmh.life/\#reaction/r0426}{r0426}).
Although each reaction is, in principle, reversible, the directions
of each hyperedge are given in the conventional orientation, consistent
with the corresponding stoichiometric matrix. Figure taken from \cite{ghaderi_structural_2020}.}
\end{figure}

We consider the network defined in \cite{ghaderi_structural_2020}.
It represents a directed stoichiometric hypergraph with 4 molecular
species \textit{$\mathcal{V}=(\ensuremath{cit},\mathit{{\rm \mathit{ici}t},{\rm \mathit{cisa}},{\rm \mathit{h2o}}})$}
and 3 reactions $\mathcal{Y}=(\mathcal{Y}_{1},\mathcal{Y}_{2},\mathcal{Y}_{3})$,
a planar representation of which is illustrated in Figure \ref{fig:mitocell}.
The $3$ reaction equations are

\begin{eqnarray}
\mathcal{Y}_{1}:\;\,cit & \rightleftharpoons & icit,\nonumber \\
\mathcal{Y}_{2}:\;\,icit & \rightleftharpoons & h2o+cisa,\nonumber \\
\mathcal{Y}_{3}:\;cit & \rightleftharpoons & h2o+cisa.\label{eq:mito3rxn}
\end{eqnarray}
The corresponding net stoichiometric matrix is

\vspace{-2mm}

\begin{table}[H]
\centering{}%
\begin{tabular}{l|ccccc|l}
\multicolumn{1}{l}{} &  & $\mathcal{Y}_{1}$ & $\mathcal{Y}_{2}$ & $\mathcal{Y}_{3}$ & \multicolumn{1}{c}{} & \tabularnewline
\cline{2-2}\cline{6-6}
 &  & 0 & 1 & 1 &  & $h2o$\tabularnewline
$N:=$ &  & -1 & 0 & -1 &  & $cit$\tabularnewline
 &  & 1 & -1 & 0 &  & $icit$\tabularnewline
 &  & 0 & 1 & 1 &  & $cisa$\tabularnewline
\cline{2-2}\cline{6-6}
\end{tabular}
\end{table}

\vspace{-2mm}
where rows and columns correspond to molecular species and reactions,
respectively. 

It is important to note that $\mathcal{Y}_{1}$ is a lumped representation
of $\mathcal{Y}_{2}$ and $\mathcal{Y}_{3}$. Specifically, $\mathcal{Y}_{1}$
represents the overall reaction, while $\mathcal{Y}_{2}$ and $\mathcal{Y}_{3}$
describe the elementary steps of this process. 

\smallskip{}

\subsection{\label{par:Molecular-graph}Molecular species graph}

Consider a molecular species $\mathcal{V}_{k}\in\mathcal{V}$, its
\textit{molecular species graph} is a connected graph $\mathcal{G}_{k}=\mathcal{G}(\mathcal{X},\mathcal{\mathcal{B}},\mathcal{V}_{k})$
where each vertex $\mathcal{X}_{i}$ is an atom and each edge $\mathcal{B}_{j}$
is a chemical bond. A chemical bond $\mathcal{B}_{ij}$ between two
atoms, $\mathcal{X}_{i}$ and $\mathcal{X}_{j}$ is an undirected
edge between two atoms in a molecular graph

\[
\mathcal{B}_{ij}:=\{\mathcal{X}_{i},\mathcal{X}_{j}\}.
\]
That is, we do not consider interactions between more than two atoms,
as may occur with hydrogen bonding or Van der Waals forces. We assume
that a molecular graph represents the topology but not the three dimensional
geometry of a molecular species, so stereoisomers have the same molecular
graph. Let $p\coloneqq|\mathcal{X}(\mathcal{V}_{k})|$ denote the
cardinality of atoms of molecular species $\mathcal{V}_{k}$ and $q\coloneqq|\mathcal{B}(\mathcal{V}_{k})|$
denote the cardinality of bonds of molecular species $\mathcal{V}_{k}$. 

Let $\mathcal{G}(\mathcal{X},\mathcal{\mathcal{B}},\mathcal{V})$
be a graph composed of $m$ molecular graphs, where each molecular
graph $\mathcal{G}(\mathcal{X},\mathcal{\mathcal{B}},\mathcal{V}_{k})$is
a connected component. Each vertex in $\mathcal{G}(\mathcal{X},\mathcal{\mathcal{B}},\mathcal{V})$
is triply labelled, with $(i)$ an element label, which is a type
of chemical element $(ii)$ an atomic label $i\in1\ldots n(\mathcal{V})$,
which uniquely identifies each of the $p(\mathcal{V}_{k})$ atoms
in $\mathcal{V}_{k}$, and $(iii)$ a molecular label, which uniquely
identifies a molecular species. Each edge is doubly labelled, with
the two vertex labels that form the chemical bond. A \textit{molecular
graph} is a graph representation of a molecular species that shows
the atoms in the molecular species and the bonds between them. A molecular
graph provides information about the connectivity of the atoms in
the molecular species, as well as the number and type of bonds between
them.

\subsubsection{\label{subsec:Matrix-representation-of}Matrix representation of
a molecular graph}

\textit{A} molecular graph $\mathcal{G}_{k}$ can be represented by
an incidence matrix $B\in\mathbb{Z}^{p\times q}$ given by

\[
B_{i,j}\coloneqq\begin{cases}
-1 & \mathcal{X}_{i}\in\text{tail of edge}e_{j},\\
1 & \mathcal{X}_{i}\in\text{head of edge}e_{j},\\
0 & \textrm{otherwise,}
\end{cases}
\]
and a weight vector $w\in\mathbb{N}^{q\times1}$ given by

\[
w_{j}=k,
\]
where $k$ is a non-negative integer indicating the order of the $j$th
bond (0 for a non-existent bond, 1 for a single bond, 2 for a double
bond, and 3 for a triple bond). The rows of the incidence matrix $B$
correspond to the $p$ atoms of the molecular graph $\mathcal{G}_{k}$,
and its columns correspond to the $q$ bonds of the molecular graph
$\mathcal{G}_{k}$. 

\subsubsection{Example molecular graph}

An example of a matrix representation of a molecular graph is provided
for acetate (\href{https://www.vmh.life/\#metabolite/ac}{ac}, ac)
in Fig. \ref{fig:ac}.

\begin{figure}
\vspace{-2mm}

\begin{minipage}[b][1\totalheight][t]{0.33\columnwidth}%
\subfloat[\label{fig:ac}Molecular graph]{

\includegraphics[width=5cm]{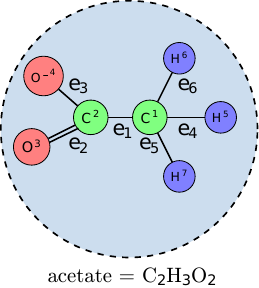}}%
\end{minipage}\hspace{-1cm}%
\begin{minipage}[b][1\totalheight][t]{0.33\columnwidth}%
\subfloat[\label{fig:acBEmatrix-1}Molecular incidence matrix and the weight
vector]{
\centering{}%
\begin{tabular}{l|cccccccc|l}
\multicolumn{1}{l}{} &  & $e_{1}$ & $e_{2}$ & $e_{3}$ & $e_{4}$ & $e_{5}$ & $e_{6}$ & \multicolumn{1}{c}{} & \tabularnewline
\cline{2-2}\cline{9-9}
 &  & -1 & 0 & 0 & -1 & -1 & -1 &  & $C^{1}$\tabularnewline
 &  & 1 & -1 & -1 & 0 & 0 & 0 &  & $C^{2}$\tabularnewline
 &  & 0 & 1 & 0 & 0 & 0 & 0 &  & $O^{3}$\tabularnewline
$B:=$ &  & 0 & 0 & 1 & 0 & 0 & 0 &  & $O^{-4}$\tabularnewline
 &  & 0 & 0 & 0 & 1 & 0 & 0 &  & $H^{5}$\tabularnewline
 &  & 0 & 0 & 0 & 0 & 0 & 1 &  & $H^{6}$\tabularnewline
 &  & 0 & 0 & 0 & 0 & 1 & 0 &  & $H^{7}$\tabularnewline
\cline{2-2}\cline{9-9}
\end{tabular}%
\begin{tabular}{l|ccc|l}
\multicolumn{1}{l}{} &  &  & \multicolumn{1}{c}{} & \tabularnewline
\cline{2-2}\cline{4-4}
 &  & 1 &  & $e_{1}$\tabularnewline
 &  & 2 &  & $e_{2}$\tabularnewline
 &  & 1 &  & $e_{3}$\tabularnewline
$w:=$ &  & 1 &  & $e_{4}$\tabularnewline
 &  & 1 &  & $e_{5}$\tabularnewline
 &  & 1 &  & $e_{6}$\tabularnewline
\cline{2-2}\cline{4-4}
\end{tabular}}%
\end{minipage}\negthinspace{}

\caption{\textbf{Acetate represented as a molecular graph and molecular incidence
matrix. }(a) The molecular graph of an acetate species (\protect\href{https://www.vmh.life/\#metabolite/ac}{ac},
ac) with the chemical formula $C_{2}H_{3}O_{2}$. The two types of
bonds are illustrated; single ($-$) and double ($=$). (b) The molecular
incidence matrix $B$ corresponds to the acetate species. Each row
corresponds to an atom, and each column corresponds to a chemical
bond in the molecular graph. Each column has two non-zero entries
where the rows correspond to the atoms forming the chemical bond.
The results are independent of the particular orientation chosen.
The weight vector $w$ is a vector where the entries represent the
type of bonds; single (1) and double (2).}
\end{figure}

\vspace{2mm}

\subsection{\label{subsec:Chemical-complex-graphs}Chemical complex graph}

Given a set of molecular species $\mathcal{V}$, a chemical complex
$\mathcal{C}(\mathcal{V})$ is a subset of molecular species that
participate together either as substrates, or products, in a reaction.
A complex graph $\mathcal{G}(\mathcal{X},\mathcal{\mathcal{\mathcal{B}}},\mathcal{C}(\mathcal{V}))$
is the disjoint union of a multiset of $\left|\mathcal{C}\right|$
molecular graphs, where each molecular graph corresponds to a molecular
species $\mathcal{V}_{k}\in\mathcal{C}$. Each vertex is triply labelled
with a molecular, elemental, and atomic labels. The total number of
vertices in complex graph $\mathcal{C}$ is 
\[
p\coloneqq\sum_{\mathcal{V}_{k}\in\mathcal{C}}|\mathcal{X}(\mathcal{V}_{k})|,
\]
where $|\mathcal{X}(\mathcal{V}_{k})|$ is the number of atoms in
molecular species $k$. The total number of edges in a chemical complex
$C(\mathcal{V})$ is
\begin{eqnarray*}
q & \coloneqq & \sum_{\mathcal{V}_{k}\in\mathcal{C}}|\mathcal{\mathcal{B}}(\mathcal{V}_{k})|,
\end{eqnarray*}
where $|\mathcal{\mathcal{B}}(\mathcal{V}_{k})|$ is the number of
bonds in molecular species $k$. The number of connected components
of a complex graph is equal to the molecularity of that complex. For
example, if a complex consists of two instances of the same molecular
species, then the complex graph contain two connected components that
are isomorphic up to vertex labelling, corresponding to a complex
with stoichiometric number (multiplicity) two for that molecular species.
A substrate chemical complex $\mathcal{S}(\mathcal{\mathcal{V}})$,
is a chemical complex formed by substrate molecular species instances
and a product chemical complex $\mathcal{P}(\mathcal{V})$ is a chemical
complex formed by product molecular species instances. Substrate and
product chemical complexes are related in pairs, one corresponding
to each reaction $\mathcal{Y}_{j}\coloneqq\{\mathcal{S}_{j}(\mathcal{\mathcal{V}}),\mathcal{P}_{j}(\mathcal{V})\}$.
A chemical complex matrix $B\in\{-1,0,1\}^{p\times q}$is an incidence
matrix representing a chemical complex $\mathcal{C}(\mathcal{V})$,
consisting of $m$ molecular graph incidence matrices arranged in
block diagonal form, where each block represents an instance of a
molecular species involved in that complex. 

\subsubsection{Example chemical complex graph}

Figure \ref{fig:achemicalComplex} represents an example of a chemical
complex matrix representing a complex of peroxynitrite (\href{https://www.vmh.life/\#metabolite/CE5643}{peroxynitrite})
and carbon dioxide.

\begin{figure}
\caption{\label{fig:achemicalComplex}Molecular incidence matrix for chemical
complex (peroxynitrite, carbon dioxide). The matrix $B$ represents
the molecular incidence matrix for the chemical complex (peroxynitrite,
carbon dioxide). The first block corresponds to the molecular incidence
matrix of peroxynitrite, and the second block corresponds to the molecular
incidence matrix of carbon dioxide.}

\centering{}%
\begin{tabular}{l|cccc|ccc|l}
\multicolumn{1}{l}{} &  & $e_{1}$ & $e_{2}$ & \multicolumn{1}{c}{$e_{3}$} & $e_{4}$ & $e_{5}$ & \multicolumn{1}{c}{} & \tabularnewline
\cline{2-2}\cline{8-8}
 &  & -1 & 0 & 0 & 0 & 0 &  & $O^{1}$\tabularnewline
 &  & 1 & -1 & 0 & 0 & 0 &  & $N^{2}$\tabularnewline
 &  & 0 & 1 & -1 & 0 & 0 &  & $O^{3}$\tabularnewline
$B:=$ &  & 0 & 0 & 1 & 0 & 0 &  & $O^{-4}$\tabularnewline
\cline{3-7}
 &  & 0 & 0 & 0 & -1 & 0 &  & $O^{5}$\tabularnewline
 &  & 0 & 0 & 0 & 1 & -1 &  & $C^{6}$\tabularnewline
 &  & 0 & 0 & 0 & 0 & 1 &  & $O^{7}$\tabularnewline
\cline{2-2}\cline{8-8}
\end{tabular}
\end{figure}

\subsection{\label{subsec:reactionMatrix}Reaction matrix}

A \textit{substrate matrix} is a chemical complex matrix that represents
the chemical complex formed by each instance of a substrate molecular
species. A \textit{product matrix} is a chemical complex matrix that
represents the chemical complex formed by each instance of a product
molecular species. Consider a reaction $\mathcal{Y}:=\{\mathcal{S}(\mathcal{V}),\mathcal{P}(\mathcal{V})\}$,
between a substrate complex $\mathcal{S}(\mathcal{V})$ and a product
complex $\mathcal{P}(\mathcal{V})$. Let $S\in\mathbb{Z}^{p\times\max(q,q')}$
be a substrate matrix and $P\in\mathbb{Z}^{p\times\max(q,q')}$ be
a product matrix, where $p$ is the number of atoms in the substrate
(or product) complex, $q$ is the number of bonds in the substrate
complex $\mathcal{S}(\mathcal{V})$ and $q'$ is the number of bonds
in the product complex $\mathcal{P}(\mathcal{V})$. Both substrate
and product complexes have the same number of atoms, so the number
of rows in the substrate and product complex matrices are the same
and we require that atom transitions are between atoms with the same
row indices in substrate and product complex matrices. Depending on
the number of bonds in the substrate complex, the number of bonds
in the product complex and the correspondence between these bonds,
the matrices $S$ or $P$ may contain additional zero columns in order
to ensure they have the same number of columns, but conserved bonds
must correspond to same column index in both matrices. Let $w_{s}\in\mathbb{N}^{\textrm{max}(q,q')\times1}$
denote the weight vector specifying the order of the bonds in the
substrate and $w_{p}\in\mathbb{N}^{\textrm{max}(q,q')\times1}$ the
weight vector specifying the order of the bonds in the product.

A chemical reaction $\mathcal{\mathcal{Y}}$ may represented by the
equation

\begin{equation}
D\coloneqq\lvert P\cdot\textrm{diag}(w_{p})\rvert-\rvert S\cdot\textrm{diag}(w_{s})\rvert\label{eq:-1-1-1}
\end{equation}
where $D\in\mathbb{Z}^{p\times\textrm{max}(q,q')}$ is a\textit{ reaction
matrix, which} is an incidence matrix where each row represents an
atom and each column represents a bond involved in the reaction. If
$D_{i,j}=0$, then the bond $j$ involving atom $i$ in the substrate
and product complex is conserved by the reaction. If $D_{i,j}$ is
negative, then atom $i$ in the substrate complex participates in
a bond $j$ that is broken during the reaction, while if $D_{i,j}$
is positive, then atom $i$ in the product complex participates in
a bond $j$ that is formed during the reaction. In a reaction, a \textit{reacting
bond} is a chemical bond that is broken, formed, or changes its order.
In a reaction, a \textit{conserved} bond is a chemical bond that is
not a reacting bond. 

\subsubsection{Example reaction matrix}

Consider the reaction illustrated in Figure \ref{fig:reactionEx},
where the substrate complexes are peroxynitrite (\href{https://www.vmh.life/\#metabolite/CE5643}{peroxynitrite})
and carbon dioxide and the product complex is the nitrosooxy carbonate
(\href{https://www.vmh.life/\#metabolite/CE6000}{nit}). The substrate
incidence matrix $S$ corresponding to the substrate complex, and
the product incidence matrix $P$ corresponding to the product complex,
are given in Figure \ref{fig:Substrate and product matrices}. In
the figure \ref{fig:R}, the matrix $D$ is the reaction incidence
matrix representing the reacting bonds in the reaction \ref{fig:reactionEx}.

\begin{figure}
\selectlanguage{american}%
\,\,\,\,\,\,\,\,\,\,\,\,\,\subfloat[\foreignlanguage{british}{\label{fig:reactionEx}The chemical conversion of \foreignlanguage{british}{peroxynitrite
(\protect\href{https://www.vmh.life/\#metabolite/CE5643}{peroxynitrite})
and $CO_{2}$ into nitrosooxy carbonate (\protect\href{https://www.vmh.life/\#metabolite/CE6000}{nit}).}}]{\selectlanguage{british}%
\includegraphics[width=15cm]{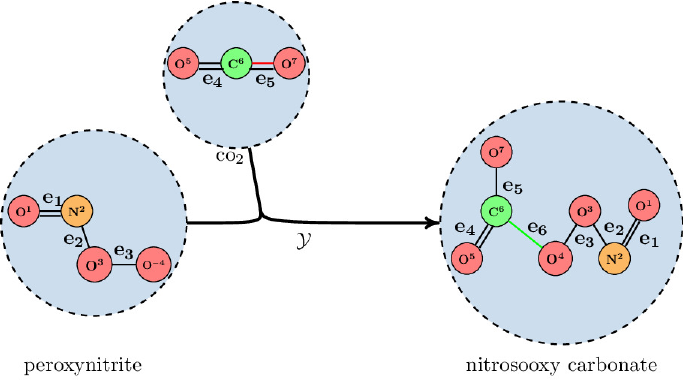}

\selectlanguage{american}%
}

\selectlanguage{british}%
\vspace{-2mm}

\begin{centering}
\hspace{-1.7cm}\subfloat[Substrate and product incidence matrices.]{\begin{centering}
\begin{tabular}{l|cccccccc|l}
\multicolumn{1}{l}{} &  & $e_{1}$ & $e_{2}$ & $e_{3}$ & $e_{4}$ & $e_{5}$ & $e_{6}'$ & \multicolumn{1}{c}{} & \tabularnewline
\cline{2-2}\cline{9-9}
 &  & -2 & 0 & 0 & 0 & 0 & 0 &  & $O^{1}$\tabularnewline
 &  & 2 & -1 & 0 & 0 & 0 & 0 &  & $N^{2}$\tabularnewline
 &  & 0 & 1 & -1 & 0 & 0 & 0 &  & $O^{3}$\tabularnewline
$S\cdot\textrm{diag}(w_{s}):=$ &  & 0 & 0 & 1 & 0 & 0 & 0 &  & $O^{4}$\tabularnewline
 &  & 0 & 0 & 0 & -2 & 0 & 0 &  & $O^{5}$\tabularnewline
 &  & 0 & 0 & 0 & 2 & -2 & 0 &  & $C^{6}$\tabularnewline
 &  & 0 & 0 & 0 & 0 & 2 & 0 &  & $O^{7}$\tabularnewline
\cline{2-2}\cline{9-9}
\end{tabular}%
\begin{tabular}{l|cccccccc|l}
\multicolumn{1}{l}{} &  & $e_{1}$ & $e_{2}$ & $e_{3}$ & $e_{4}$ & $e_{5}$ & $e_{6}$ & \multicolumn{1}{c}{} & \tabularnewline
\cline{2-2}\cline{9-9}
 &  & -2 & 0 & 0 & 0 & 0 & 0 &  & $O^{1}$\tabularnewline
 &  & 2 & -1 & 0 & 0 & 0 & 0 &  & $N^{2}$\tabularnewline
 &  & 0 & 1 & -1 & 0 & 0 & 0 &  & $O^{3}$\tabularnewline
$P\cdot\textrm{diag}(w_{p}):=$ &  & 0 & 0 & 1 & 0 & 0 & -1 &  & $O^{4}$\tabularnewline
 &  & 0 & 0 & 0 & -2 & 0 & 0 &  & $O^{5}$\tabularnewline
 &  & 0 & 0 & 0 & 2 & -1 & 1 &  & $C^{6}$\tabularnewline
 &  & 0 & 0 & 0 & 0 & 1 & 0 &  & $O^{7}$\tabularnewline
\cline{2-2}\cline{9-9}
\end{tabular}
\par\end{centering}
}
\par\end{centering}
\vspace{2mm}

\begin{centering}
\subfloat[\foreignlanguage{british}{\label{fig:R} Reaction incidence matrix.}]{\begin{centering}
\begin{tabular}{l|cccccccc|l}
\multicolumn{1}{l}{} &  & $h_{1}$ & $h_{2}$ & $h_{3}$ & $h_{4}$ & $h_{5}$ & $h_{6}$ & \multicolumn{1}{c}{} & \tabularnewline
\cline{2-2}\cline{9-9}
 &  & 0 & 0 & 0 & 0 & 0 & 0 &  & $O^{1}$\tabularnewline
 &  & 0 & 0 & 0 & 0 & 0 & 0 &  & $N^{2}$\tabularnewline
 &  & 0 & 0 & 0 & 0 & 0 & 0 &  & $O^{3}$\tabularnewline
$D:=$ &  & 0 & 0 & 0 & 0 & 0 & 1 &  & $O^{4}$\tabularnewline
 &  & 0 & 0 & 0 & 0 & 0 & 0 &  & $O^{5}$\tabularnewline
 &  & 0 & 0 & 0 & 0 & -1 & 1 &  & $C^{6}$\tabularnewline
 &  & 0 & 0 & 0 & 0 & -1 & 0 &  & $O^{7}$\tabularnewline
\cline{2-2}\cline{9-9}
\end{tabular}
\par\end{centering}
}
\par\end{centering}
\caption{\label{fig:Substrate and product matrices}\foreignlanguage{british}{The
chemical conversion of }peroxynitrite (\protect\href{https://www.vmh.life/\#metabolite/CE5643}{peroxynitrite})
and $CO_{2}$ into nitrosooxy carbonate (\protect\href{https://www.vmh.life/\#metabolite/CE6000}{nit}).
(a) The chemical bonds are represented by the edges in the molecular
graphs. In the reaction, a double bond ($C^{6}-O^{7}$, $e_{5}$)
is broken in $CO_{2}$, and a bond ($C^{6}-O^{4}$, $e_{6}$) is formed.
The chemical bonds in the substrate complex corresponding to those
in the product complex have the same labelling $(e_{1},e_{2},e_{3},e_{4})$,
which represent the conserved bonds. (b) $S$ is the substrate incidence
matrix corresponding to the peroxynitrite and $CO_{2}$. $P$ is the
product incidence matrix corresponding to nitrosooxy carbonate. (c)
$D$ is the incidence reaction matrix calculated by Eq \ref{eq:-1-1-1},
it represents the reacting bonds in the reaction. The matrix $D$
has two negative values between atom $C^{6}$ and atom $O^{7}$, representing
the broken double bond ($C^{6}-O^{7}$, $h_{5}$), and one positive
value between atom $C^{4}$and atom $C^{6}$ representing the formed
bond ($C^{6}-O^{4}$, $h_{6}$). In this reaction, the conserved atoms
are $\{O^{1},N^{2},O^{3},O^{5}\}$ while the reacting atoms are $\{O^{4},C^{6},O^{7}\}$.}
\end{figure}

\vspace{2mm}

\subsection{\label{subsec:Atom-mapping}Atom mapping}

Given a substrate chemical complex $\mathcal{S}(\mathcal{\mathcal{V}})$,
a product chemical complex $\mathcal{P}(\mathcal{V})$ and a reaction
$\mathcal{Y}\coloneqq\{\mathcal{S}(\mathcal{\mathcal{V}}),\mathcal{P}(\mathcal{V})\}$,
an atom transition is a labelled edge $\mathcal{E}:=\{\mathcal{X}_{i},\,\mathcal{X}_{j}\}$
that joins vertex $\mathcal{X}_{i}$ of molecular species $\mathcal{\mathcal{V}}_{k}$
in complex graph $\mathcal{G}(\mathcal{X},\mathcal{\mathcal{Y}},\mathcal{S}(\mathcal{\mathcal{V}}))$
with vertex $\mathcal{X}_{j}$ of molecular species $\mathcal{\mathcal{V}}_{l}$
in complex graph $\mathcal{G}(\mathcal{X},\mathcal{\mathcal{Y}},\mathcal{P}(\mathcal{V}))$.
The edge is labelled with a reaction label, which associates it with
a unique reaction. The element label of the vertex $\mathcal{X}_{i}\in\mathcal{G}(\mathcal{X},\mathcal{\mathcal{Y}},\mathcal{S})$
is the same as the element label of the vertex $\mathcal{X}_{j}\in\mathcal{G}(\mathcal{X},\mathcal{\mathcal{Y}},\mathcal{P})$.
That is, an atom transition is an edge between a pair of atoms of
the same element, one in each of the pair of complexes involved in
a reaction. Therefore, in a reaction, the total number of atoms of
each element in both complexes is the same. The molecular and atomic
labels may be different for both vertices in an atom mapping.

Given a set of molecular species $\mathcal{V}$ and a reaction $\mathcal{Y}\coloneqq\{\mathcal{S}(\mathcal{\mathcal{V}}),\mathcal{P}(\mathcal{V})\}$,
an atom mapping is a graph $\mathcal{G}(\mathcal{X},\mathcal{\mathcal{E}},\mathcal{H}\{\mathcal{S}(\mathcal{\mathcal{V}}),\mathcal{P}(\mathcal{V})\})$
formed by the disjoint union of the set of
\[
\left|\mathcal{E}\right|\coloneqq\sum_{\mathcal{V}_{k}\in\mathcal{S}}|\mathcal{X}(\mathcal{V}_{k})|=\sum_{\mathcal{V}_{k}\in\mathcal{P}}|\mathcal{X}(\mathcal{V}_{k})|
\]
atom transitions, between 
\[
\left|\mathcal{X}\right|\coloneqq\sum_{\mathcal{V}_{k}\in\mathcal{S}}|\mathcal{X}(\mathcal{V}_{k})|+\sum_{\mathcal{V}_{k}\in\mathcal{P}}|\mathcal{X}(\mathcal{V}_{k})|=2\left|\mathcal{E}\right|
\]
vertices. Each edge is labelled with an identical reaction label.
Each vertex is labelled with an element label, a molecular label and
an atomic label. Note that an atom mapping consists of $\left|\mathcal{E}\right|$
connected components, each of which contains one edge and two vertices
with identical element labels. That is, all edges of the molecular
graphs of each molecular species in $\mathcal{V}$ are omitted. One
reaction may correspond to multiple alternative atom mappings, e.g.,
if a molecular topology has a symmetrical subgraph, this may permit
multiple alternate atom mappings whose vertices are equivalent with
respect to element labelling, but not with respect to atomic labelling.

\subsubsection{Example atom mapping}

Figure \foreignlanguage{british}{\ref{fig:reactionEx}} illustrates
an atom mapping for a chemical reaction, where the substrates are
peroxynitrite (\href{https://www.vmh.life/\#metabolite/CE5643}{peroxynitrite})
and $CO_{2}$, and the product complex is nitrosooxycarbonate (\href{https://www.vmh.life/\#metabolite/CE6000}{nit}).

\subsection{\label{subsec:Generation-of-ATM}Directed atom transition multigraph}

Given a directed stoichiometric hypergraph $\mathcal{H}(\mathcal{X},\mathcal{Y}\{\mathcal{S}(\mathcal{V}),\mathcal{P}(\mathcal{V})\})$
and an atom \\
mapping $\mathcal{G}(\mathcal{X},\mathcal{\mathcal{Y}},\mathcal{H}\{\mathcal{S}(\mathcal{V}),\mathcal{P}(\mathcal{V})\})$
for each reaction, a \textit{directed atom transition multigraph }$\mathcal{G}(\mathcal{X},\mathcal{\mathcal{E}},\mathcal{H})$\textit{
is a multigraph }formed by the union of a set of $n\coloneqq\left|\mathcal{Y}\right|$
directed atom mappings, each of which corresponds to a reaction. The
union merges vertices of atom mappings that have identical molecular,
elemental and atomic labels, but duplicates edges if they have the
same head and tail vertices. Each of the $p\coloneqq\left|\mathcal{X}\right|$
vertices corresponds to an atom of an element in one of the $m\coloneqq\left|\mathcal{V}\right|$
molecular species, so each vertex is labelled with molecular, elemental
and atomic labels. Each of the $t\coloneqq\left|\mathcal{E}\right|$
edges corresponds to a directed atom transition in an atom mapping
corresponding to one of the $n\coloneqq\left|\mathcal{Y}\right|$
reactions, so each edge is labelled with a reaction label. The topology
of \textit{a directed atom transition multigraph} is represented by
an incidence matrix $T\in\left\{ -1,0,1\right\} ^{p\times t}$, where
each row is an instance of a chemical element in a particular molecular
species, and each directed edge is a\textit{ directed atom transition}.

A stoichiometric matrix $N\in\mathbb{Z}^{m\times n}$ may be related
to the incidence matrix of the corresponding \textit{directed atom
transition multigraph }$T\in\left\{ -1,0,1\right\} ^{p\times t}$\textit{
by defining }two mapping matrices, as follows. Let $V\in\left\{ 0,1\right\} ^{m\times p}$
denote a matrix that maps each molecular species to each atom, that
is $V_{i,j}=1$ if molecular species $i$ contains atom j, and $V_{i,j}=0$
otherwise. Each column of $V$ contains a single $1$ since each atom
is labelled with molecular, and atomic labels and is therefore specific
to a particular molecular species. Let $E\in\left\{ 0,1\right\} ^{t\times n}$
denote a matrix that maps each directed atom transition to each reaction,
that is $E_{i,j}=1$ if directed atom transition $i$ occurs in reaction
$j$ and $E_{h,j}=0$ otherwise. Then a stoichiometric matrix $N$
can be decomposed in terms of its \textit{directed atom transition
multigraph} with
\begin{equation}
N=\left(VV^{\mathrm{T}}\right)^{-1}VTE.\label{eq:multigraphDecomposition}
\end{equation}
The decomposition in Eq. \ref{eq:multigraphDecomposition} can more
easily be interpreted by rearranging terms to obtain,
\begin{equation}
VV^{\mathrm{\mathrm{T}}}N=VTE.\label{eq:VME-1}
\end{equation}
Since each column of $V$ contains a single $1$, the matrix $VV^{\mathrm{T}}\in\mathbb{N}^{m\times m}$
is a diagonal matrix with the total number of atoms in each molecular
species along the diagonal. The right hand side of Eq. \ref{eq:VME-1}
is therefore the internal stoichiometric matrix with each row scaled
by the total number of atoms in the corresponding molecular species.
Every molecular species contains at least one atom, so $\left(VV^{\mathrm{T}}\right)$
is invertible.

\subsection{\textit{\label{subsec:Generation-of-ATG}Atom transition graph}}

Given a \textit{directed atom transition multigraph,} an \textit{atom
transition graph is an undirected graph }$\mathcal{T}(\mathcal{X},\mathcal{\mathcal{E}},\mathcal{H})$
formed by removing duplicate vertices, that have identical elemental
and atomic labels, and by removing edges that are identical when head
and tail vertices are swapped. Each of the $p\coloneqq\left|\mathcal{X}\right|$
vertices corresponds to an atom of an element in one of the $m\coloneqq\left|\mathcal{V}\right|$
molecular species and is labelled with molecular, elemental and atomic
labels. Each of the $q\coloneqq\left|\mathcal{E}\right|$ edges corresponds
to an atom transition in one or more atom mappings and is unlabelled. 

Let $T\in\left\{ -1,0,1\right\} ^{p\times q}$ denote the incidence
matrix of an atom transition graph. Let $E\in\left\{ -1,0,1\right\} ^{q\times n}$
denote a matrix that maps each atom transition to one or more reactions,
that is $E_{i,j}=1$ if atom transition $i$ occurs with the same
orientation as reaction $j$, $E_{i,j}=-1$ if atom transition $i$
occurs with the opposite orientation to reaction $j$ and $E_{h,j}=0$
otherwise. The internal stoichiometric matrix $N$ can be decomposed
in terms of \textit{an atom transition graph with}
\begin{equation}
N=\left(VV^{\mathrm{T}}\right)^{-1}VTE.\label{eq:graphDecomposition}
\end{equation}
Note that the dimension of the incidence matrices representing a directed
atom transition multigraph and a corresponding atom transition graph
may not be the same as the latter may have fewer columns, that is
$t\ge q$. Furthermore, for an atom transition graph, the matrix $E$
has entries in the set $\{-1,0,1\}$ rather than just $\{0,1\}$,
to reflect reorientation with respect to certain reactions. However,
the matrix $V$ is the same for the decomposition of a stoichiometric
matrix in terms of a directed atom transition multigraph or an atom
transition graph.

\subsection{\label{subsec:Molecular-transition-graph}Molecular transition graph}

Given a directed stoichiometric hypergraph, a molecular transition
graph is \textit{an undirected graph} that is the union of the corresponding
molecular graph $\mathcal{G}(\mathcal{X},\mathcal{\mathcal{B}},\mathcal{V})$
and atom transition graph $\mathcal{T}(\mathcal{X},\mathcal{E},\mathcal{H})$.
In a molecular transition graph, each vertex is an atom and each edge
either corresponds to a bond in a molecular species or to an atom
transition in one or more reactions. Accordingly, a molecular transition
graph is denoted $\mathcal{L}(\mathcal{X},\mathcal{B},\mathcal{E},\mathcal{H})$,
where $\mathcal{X}$ is the set of atoms, $\mathcal{B}$ is the set
of bonds, $\mathcal{E}$ is the set of atom transitions and $\mathcal{H}$
is the stoichiometric hypergraph $\mathcal{H}=\mathcal{H}(\mathcal{X},\mathcal{Y}\{\mathcal{S}(\mathcal{V}),\mathcal{P}(\mathcal{V})\})$.
Each vertex is labelled with molecular, elemental and atomic labels.
Each bond edge is doubly labelled, with the two vertex labels that
form the chemical bond, and each atom transition edge is unlabelled.
The topology of a molecular transition graph\textit{ }$\mathcal{L}(\mathcal{X},\mathcal{B},\mathcal{E},\mathcal{H})$
is given by an incidence matrix $A\in\left\{ -1,0,1\right\} ^{p\times q}$,
where $p=\left|\mathcal{X}\right|$ and $q=\left|\mathcal{B}\right|+\left|\mathcal{E}\right|$.

\subsubsection{Example of a molecular transition graph}

Figure \ref{fig:MolecularTransition_reactionEx} illustrates the molecular
transition graph for the chemical reaction shown in Figure \ref{fig:reactionEx}. 

\begin{figure}
\includegraphics[width=15cm]{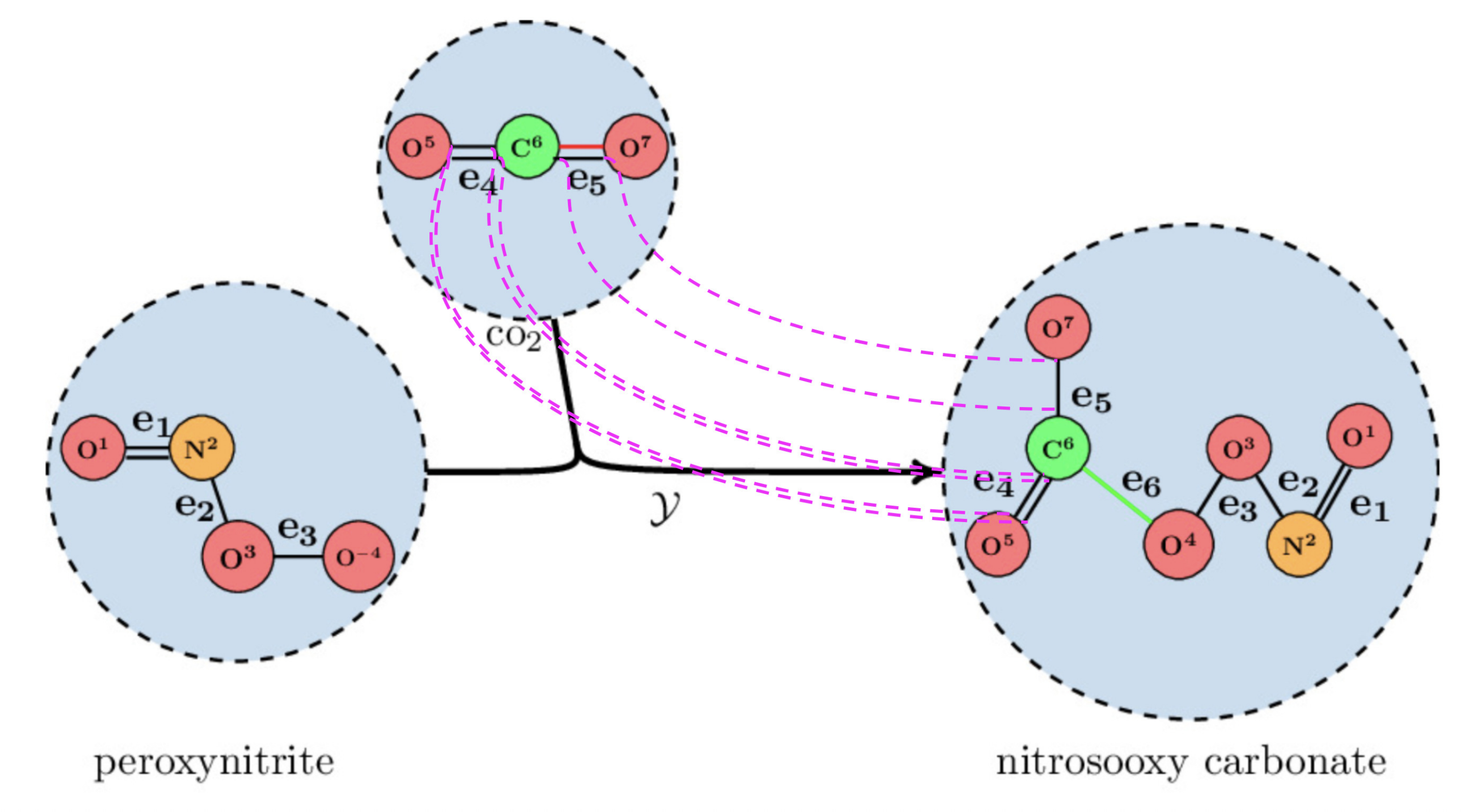}

\caption{\foreignlanguage{british}{\label{fig:MolecularTransition_reactionEx} \textbf{Molecular transition
graph. }For simplicity, only the atom transitions involving atoms
$O5$, C6 and O7 are shown. Each atom in the substrate is connected
to its corresponding atom in the product by a dashed magenta edge,
indicating the atom transition. The double bond $e_{4}$ between atoms
$O5$ and $C6$ is conserved, whereas the double bond $e_{5}$changes
order.}}
\end{figure}

\section{\label{subsec:Conserved-and-reacting-MTG}Conserved and reacting
graphs}

In a molecular transition graph, an edge corresponds to a reacting
bond if the bond is broken, formed or changes its order in at least
one reaction, otherwise it is a \textit{conserved} bond. In a molecular
transition graph an atom is an ambivorous atom if it participates
in at least one reacting bond, otherwise it is a \textit{conserved}
atom. These descriptive definitions enable a molecular transition
graph to be partitioned into conserved and reacting subgraphs. Next
these definitions are given in graph theoretical terms leading to
a partition of the molecular transition graph incidence matrix.

\subsection{Bond transition graphs}

Consider a bond $\mathcal{B}_{ij}:=\{\mathcal{X}_{i},\mathcal{X}_{j}\}$
in a molecular transition graph $\mathcal{L}(\mathcal{X},\mathcal{B},\mathcal{E},\mathcal{H})$.
Associated with vertex $\mathcal{X}_{i}$ and $\mathcal{X}_{j}$ are
two corresponding connected components, $\mathcal{C}_{i}$ and $\mathcal{C}_{j}$,
of the atom transition graph $\mathcal{G}(\mathcal{X},\mathcal{E},\mathcal{H})$,
which is, by definition, a subgraph of the molecular transition graph
$\mathcal{L}(\mathcal{X},\mathcal{B},\mathcal{E},\mathcal{H})$ that
includes all atom transition edges but no bond edges. Let $\mathcal{G}_{\mathcal{C}_{i},\mathcal{C}_{j}}$
denote the subgraph of the molecular transition graph $\mathcal{G}(\mathcal{X},\mathcal{\mathcal{B}},\mathcal{V})$
representing all of the bonds connecting at one atom of $\mathcal{C}_{i}$
with one atom of $\mathcal{C}_{j}$, that is 
\[
\mathcal{G}_{\mathcal{C}_{i},\mathcal{C}_{j}}\coloneqq\{\mathcal{B}_{ij}\text{\ensuremath{\mid}}\mathcal{X}_{i}\in\mathcal{C}_{i},\mathcal{X}_{j}\in\mathcal{C}_{j},\mathcal{B}_{ij}\in\mathcal{B}\}.
\]
The $k^{th}$ bond transition graph $\mathcal{L}_{k}(\mathcal{C}_{j},\mathcal{C}_{i})$,
of a molecular transition graph $\mathcal{L}(\mathcal{X},\mathcal{B},\mathcal{E},\mathcal{H})$,
is the union of the connected components $\mathcal{C}_{i}$ and $\mathcal{C}_{j}$
with $\mathcal{G}_{\mathcal{C}_{i},\mathcal{C}_{j}}$, that is
\[
\mathcal{L}_{k}\coloneqq\mathcal{C}_{j}\cup\mathcal{C}_{i}\cup\mathcal{G}_{\mathcal{C}_{i},\mathcal{C}_{j}}.
\]
where $\mathcal{C}_{i}$ and $\mathcal{C}_{j}$ are two connected
components in the atom transition graph $\mathcal{G}(\mathcal{X},\mathcal{E},\mathcal{H})$,
and $\mathcal{G}_{\mathcal{C}_{i},\mathcal{C}_{j}}$is the subgraph
of the molecular graph $\mathcal{G}(\mathcal{X},\mathcal{\mathcal{B}},\mathcal{V})$
that represents the bonds connecting atoms in $\mathcal{C}_{i}$ and
$\mathcal{C}_{j}$. When no bond exists between any pair of atoms
in $\mathcal{C}_{i}$ and $\mathcal{C}_{j}$, then $\mathcal{G}_{\mathcal{C}_{i},\mathcal{C}_{j}}$
is an empty set. Note that, a double bond, corresponds to two bond
transition graphs, one for each bond individual bonding interaction,
to enable consideration of reactions where a double bond is replaced
by a single bond, or vice versa. 

A bond $\mathcal{B}_{ij}$ between atoms $\mathcal{X}_{i}$ and $\mathcal{X}_{j}$
is a conserved bond if it remains unchanged in all reactions 

\[
\forall\mathcal{Y}\in\mathcal{H},\text{ }\mathcal{B}_{ij}\in\mathcal{S}(\mathcal{Y})\cap\mathcal{P}(\mathcal{Y}),
\]
where $\mathcal{S}(\mathcal{Y})$ and $\mathcal{P}(\mathcal{Y})$
are the sets of bonds in the substrate and product complexes of reaction
$\mathcal{Y}$, respectively. A \textit{conserved bond transition
graph} $\bar{\mathcal{L}}_{k}$ is a bond transition graph whose bonds
are conserved (neither created, broken or changed order) in any reaction
in a molecular transition graph $\mathcal{L}(\mathcal{X},\mathcal{B},\mathcal{E},\mathcal{H})$.
A bond $\mathcal{B}_{ij}$ between atoms $\mathcal{X}_{i}$ and $\mathcal{X}_{j}$
is a reacting bond if it is either formed, broken or changes order
in a reaction 

\[
\forall\mathcal{Y}\in\mathcal{H},\text{ }\mathcal{B}_{ij}\in\mathcal{S}(\mathcal{Y})\,\triangle\,\mathcal{P}(\mathcal{Y}),
\]
where $\triangle$ represents the symmetric difference between the
substrate bond set $\mathcal{S}(\mathcal{Y})$ and the product bond
set $\mathcal{P}(\mathcal{Y})$, indicating that a bond is either
formed, broken or changes order during the reaction. A reacting\textit{
bond transition graph} $\hat{\mathcal{\mathcal{L}}}_{k}$ is a bond
transition graph where at least one reaction involves a reacting bond.
Thus, $\hat{\mathcal{\mathcal{L}}}$ captures the molecular transitions
that involve bond changes, presenting the reacting bonds in the chemical
network. Each bond transition graph is either a conserved or reacting\textit{
bond transition graph.} 

\subsubsection{Examples of conserved and reacting bond transition graphs}

Figure \ref{fig:ConservedReactingBondTransitionGraph} illustrates
an example of a conserved bond transition graphs and an example of
a reacting bond transition graph. 

\begin{figure}
\subfloat[]{\includegraphics[scale=0.25]{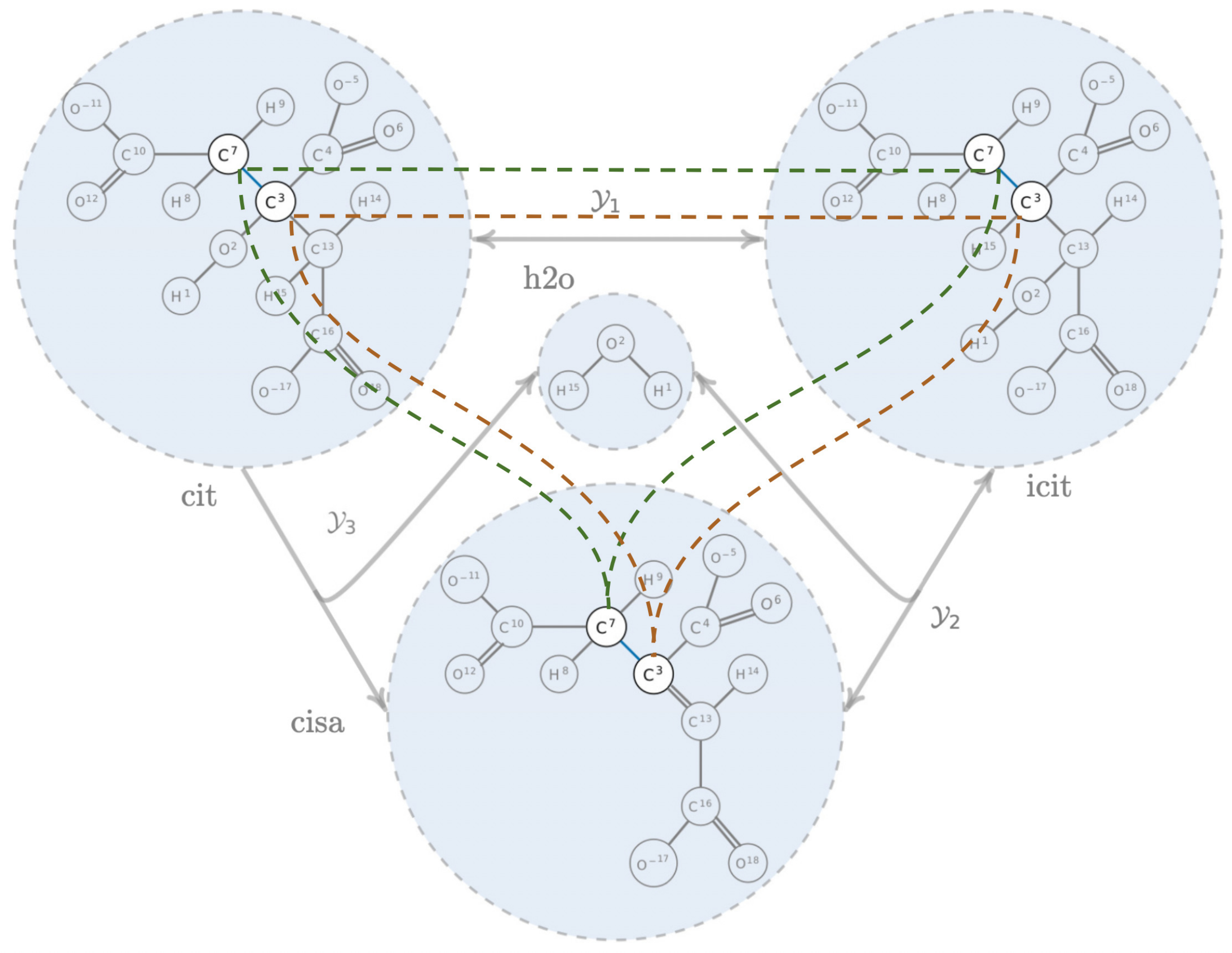}

}\subfloat[]{\includegraphics[scale=0.25]{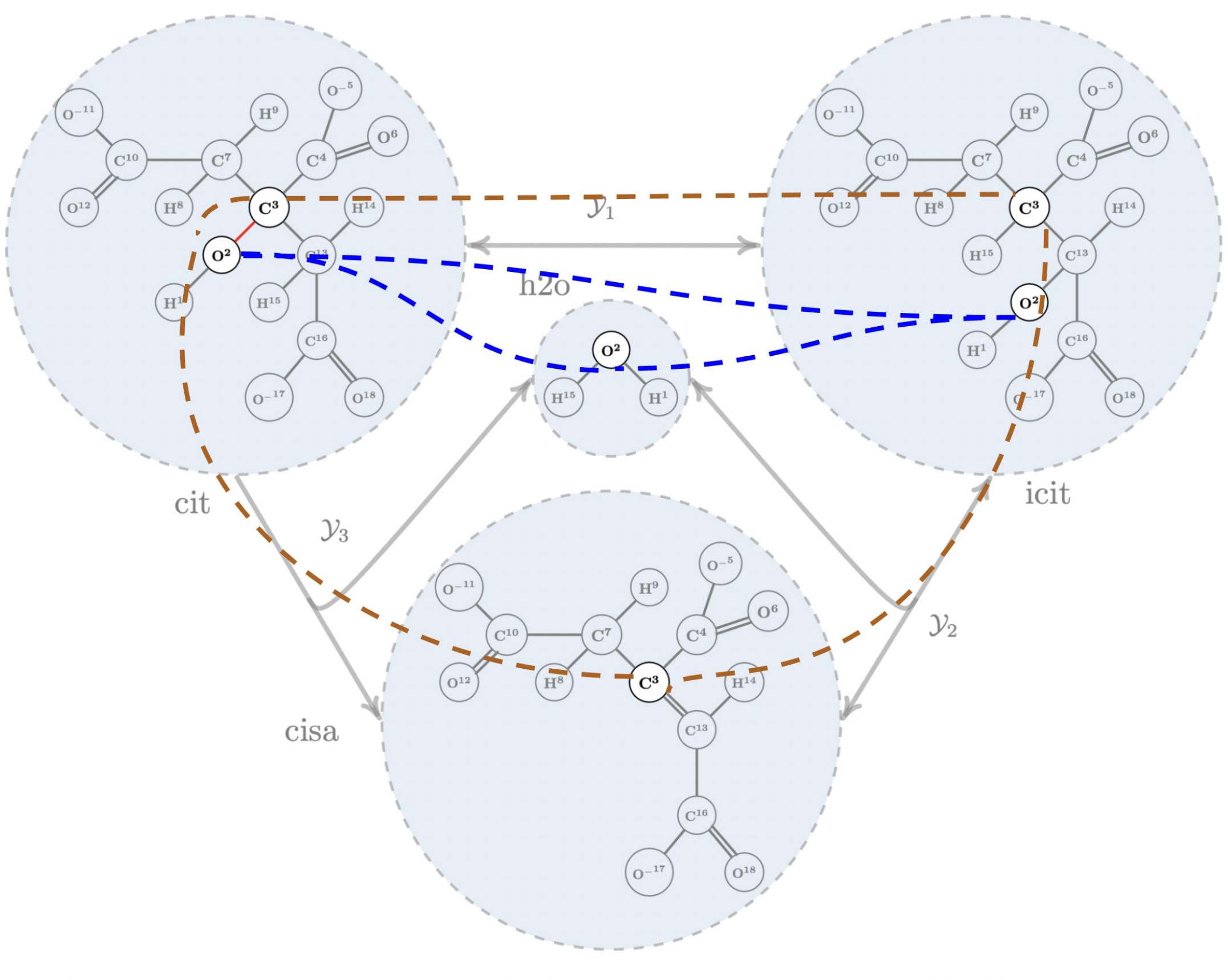}}\caption{\textbf{\label{fig:ConservedReactingBondTransitionGraph}Conserved
and reacting bond transition graph. }Atom transitions are dashed and
labelled with colours corresponding to components $C3$ (dark orange),
$C7$ (green) and $O2$ (dark blue). (a) \textbf{Conserved bond transition
graph.} The bond $C7-C3$ is conserved in reaction $\mathcal{Y}_{1},$$\mathcal{Y}_{2},$and
$\mathcal{Y}_{3}$ (continuous blue edges). (b) \textbf{Reacting bond
transition graph.} The bond $C3-O2$ is broken in reaction $\mathcal{Y}_{1}$
and in reaction $\mathcal{Y}_{3}$ (reacting bonds are in continuous
red lines).}
\end{figure}

\subsection{Conserved and reacting molecular transition graphs}

\textit{A conserved molecular transition graph} $\bar{\mathcal{L}}(\mathcal{X},\bar{\mathcal{B}},\mathcal{E},\mathcal{H})$
is the union of all conserved bond transition graphs of a molecular
transition graph, that is 
\[
\bar{\mathcal{L}}\coloneqq\bigcup_{k}\bar{\mathcal{L}}_{k}\subset\mathcal{L},
\]
where $\bar{\mathcal{B}}$ denotes a set of conserved bonds. A reacting
molecular transition graph $\hat{\mathcal{\mathcal{L}}}(\mathcal{X},\hat{\mathcal{B}},\mathcal{E},\mathcal{H})$
is the union of all reacting bond transition graphs of a molecular
transition graph, that is 
\[
\hat{\mathcal{\mathcal{L}}}\coloneqq\bigcup_{k}\hat{\mathcal{\mathcal{L}}}_{k}\subset\mathcal{L},
\]
where $\hat{\mathcal{B}}$ denotes a set of reacting bonds. Each vertex
in a reacting molecular transition graph is termed an ambivorous atom
as it is also a vertex in a \textit{conserved molecular transition
graph. A} molecular transition graph is the union of a conserved and
a reacting molecular transition graph, that is 
\[
\mathcal{L}=\bar{\mathcal{L}}\cup\hat{\mathcal{\mathcal{L}}}.
\]
A molecular transition graph\textit{ }$\mathcal{L}(\mathcal{X},\mathcal{B},\mathcal{E},\mathcal{H})$
is given by an incidence matrix $A\in\left\{ -1,0,1\right\} ^{\left|\mathcal{X}\right|\times\left(\left|\mathcal{B}\right|+\left|\mathcal{E}\right|\right)}$.
The columns of this incidence matrix may be partitioned into one subset
of edges corresponding to conserved bonds $\bar{B}\in\left\{ -1,0,1\right\} ^{\left|\mathcal{X}\right|\times\left|\bar{\mathcal{B}}\right|}$
and atom transitions $T\in\left\{ -1,0,1\right\} ^{\left|\mathcal{X}\right|\times\left|\mathcal{E}\right|}$
and one subset of edges corresponding to reacting bonds $\hat{B}\in\left\{ -1,0,1\right\} ^{\left|\mathcal{X}\right|\times\left|\hat{\mathcal{B}}\right|}$,
that is

\begin{equation}
A=\left[\begin{array}{ccc}
\bar{A} & \hat{A} & T\end{array}\right]\label{eq:partitionedMolecularGraph}
\end{equation}

\subsubsection{Example of a conserved and reacting molecular transition graphs}

Figure \ref{fig:Reacting-moieties-molecular} illustrates the distinction
between conserved and reacting graphs respect to the network introduced
in \ref{subsec:Example-directed-stoichiometric}.

\begin{figure}[H]
\begin{centering}
\includegraphics[scale=0.95]{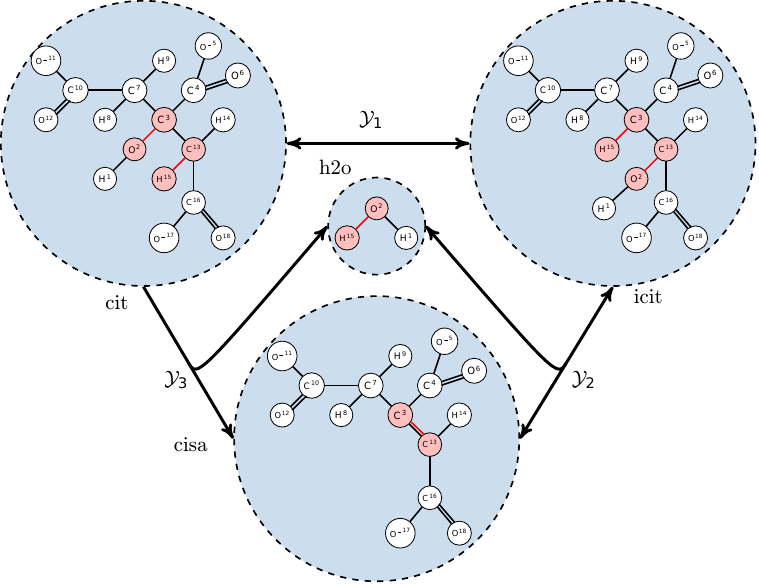}
\par\end{centering}
\caption{\textbf{\label{fig:Reacting-moieties-molecular}Molecular transition
graph partition.} Each bond is either conserved (black) or reacting
(red), while each atom is either conserved (white) or ambivorous (pink).
Note that both atoms involved in a reacting bond are reacting atoms
reacting atom, while a reacting atom may also be involved in a conserved
bond.}
\end{figure}

\section{\label{subsec:Identification-of-conserved}Conserved moieties}

In this section, we identify a conserved moiety as a species representing
a set of conserved moiety instances, with identical molecular topology
wherever they occur within the molecular graphs of a molecular network,
and are invariant with respect to all chemical transformations in
that network. First, we identify the set of atoms contained in each
conserved moiety instance by analysis of an atom transition graph,
then we identify the set of bonds contained in each conserved moiety
by analysis of the corresponding \textit{conserved molecular transition
graph.}

\subsection{\label{subsec:Identification-of-CATM}Connected components of an
atom transition graph}

Consider the connected components of an atom transition graph, $\mathcal{T}(\mathcal{X},\mathcal{\mathcal{E}},\mathcal{H})$.
Lemma \ref{thm:splittingN} demonstrates that the incidence matrix
representing a graph can be expressed as the sum of a set of incidence
matrices corresponding to its connected components. Let $C\in\{0,1\}^{c\times p}$
be a mapping between connected components and atoms in an atom transition
graph, where $C_{i,j}=1$ if connected component $i$ contains atom
$j$ and $C_{i,j}=0$ otherwise. Then by Lemma \ref{thm:splittingN},
we have 

\begin{equation}
T=\mathrm{\textrm{diag}^{-1}}\left(C^{T}\mathbbm{1}\right)\sum_{i=1}^{c}T(i),\label{eq:Asplit-1}
\end{equation}
where $T(i)\in\left\{ -1,0,1\right\} ^{p\times q}$ is an incidence
matrix for the $i^{th}$ connected component of $\mathcal{G}(\mathcal{X},\mathcal{\mathcal{E}},\mathcal{H})$,
given by
\begin{equation}
T(i)\coloneqq\textrm{diag}(C_{i,:})T.\label{def:A(k)-1}
\end{equation}
Next, we will show how connected components that are identical in
particular ways may be identified.

\subsection{\label{subsec:Label-preserving-isomorphism}Isomorphic connected
components of an atom transition graph}

We define a pair of connected components in an atom transition graph
to be isomorphic, under a label-preserving isomorphism, if their incidence
matrices are permutationally equivalent and the molecular species
label of each atom is preserved. Henceforth, for brevity, we denote
a label-preserving isomorphism simply as an isomorphism. A maximal
subgraph isomorphism class of an atom transition graph is a maximal
set of pairwise isomorphic connected components of that graph. Each
conserved moiety corresponds to one maximal subgraph isomorphism class
of an atom transition graph. Each atom in a conserved moiety corresponds
to a distinct connected component in a maximal subgraph isomorphism
class of an atom transition graph. The number of atoms in a conserved
moiety is equal to the number of connected components in the corresponding
maximal subgraph isomorphism class of an atom transition graph. An
instance of a conserved moiety is composed of atoms, each of which
have the same molecular label. The number of conserved moieties is
equal to the number of maximal subgraph isomorphism classes of connected
components of an atom transition graph $\mathcal{A}(\mathcal{X},\mathcal{\mathcal{E}},\mathcal{H})$.

\subsection{\label{subsec:Example-of-conserved}Example isomorphism classes of
an atom transition graph}

Figure \ref{fig:iso_lable_moities} illustrates the two conserved
moieties of the 3 reaction biochemical network introduced in Section
\ref{subsec:Example-directed-stoichiometric} 

\begin{figure}[H]
\begin{centering}
\includegraphics[scale=0.95]{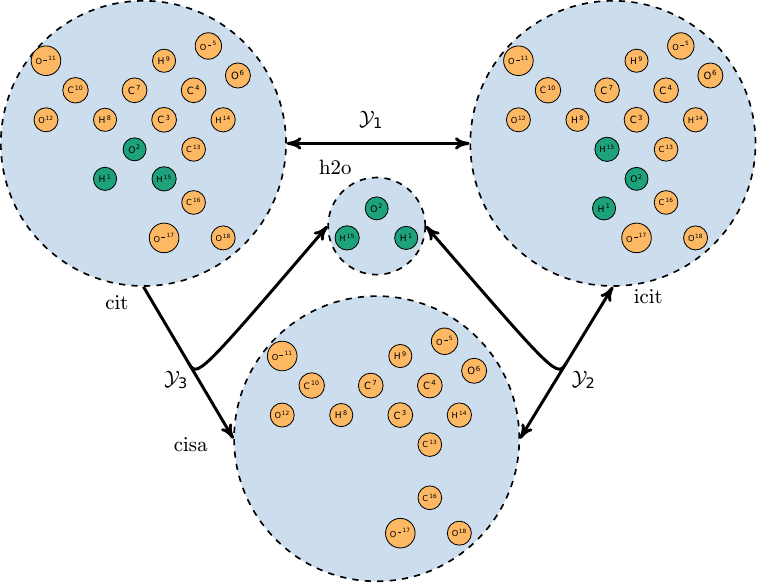}
\par\end{centering}
\caption{\label{fig:iso_lable_moities}\textbf{Maximal isomorphism classes
of an atom transition graph. }Each molecular species in the set $\{icit,\:h2o,\:cit,\:cisa\}$
is displayed as a set of atoms, without considering bonds. Connected
components corresponding to atoms 1, 2 and 15 (green) form one maximal
isomorphism class. In a metabolite, the set of atoms $\{H^{1},O^{2},H^{15}\}$
correspond to an instance of one conserved moiety. Connected components
corresponding to atoms 4, 5, 6, 8, 9, 10, 11, 12, 13, 14, 3, 16, 17,
and 18 (yellow) form another maximal isomorphism class and therefore
another conserved moiety.}
\end{figure}

\subsection{\label{subsec:Conserved-moiety-structure}Conserved moiety topology}

Thus far we have identified the atoms but not yet the bonds within
a conserved moiety. To completely identify a chemical (sub)topology
of a molecular species that remains invariant with respect to the
chemical transformations in a given biochemical network we also need
to identify the bonds within a conserved moiety. Consider a conserved
molecular transition graph $\bar{\mathcal{L}}(\mathcal{X},\bar{\mathcal{B}},\mathcal{E},\mathcal{H})$,
where each vertex is an atom and each edge is either an atom transition
or a conserved bond\textit{. Contract} each subgraph of the conserved
molecular transition graph $\bar{\mathcal{L}}(\mathcal{X},\bar{\mathcal{B}},\mathcal{E},\mathcal{H})$
that is connected by a set of atom transitions, into a single vertex
to generate a condensed conserved molecular graph $\bar{\mathcal{L}}(\mathcal{X},\bar{\mathcal{B}},\mathcal{H})$
(cf graph condensation in Section \ref{subsec:Graph-condensation}).
Each vertex of $\bar{\mathcal{L}}(\mathcal{X},\bar{\mathcal{B}},\mathcal{H})$
results from contraction of a connected component of an atom transition
graph $\mathcal{T}(\mathcal{X},\mathcal{\mathcal{E}},\mathcal{H})$
and now represents an atom in a conserved moiety. Each edge of $\bar{\mathcal{L}}(\mathcal{X},\bar{\mathcal{B}},\mathcal{H})$
results from contraction of a conserved bond transition graph and
now represents a bond in a conserved moiety. 

Let $\bar{\underline{A}}$ denote the incidence matrix of the condensed
conserved graph $\bar{\mathcal{L}}(\mathcal{X},\bar{\mathcal{B}},\mathcal{H})$,
which is obtained by the following condensation
\[
\bar{\underline{A}}\coloneqq\textrm{diag}\left(\left[\begin{array}{c}
d\left(\bar{A}\right)\\
d\left(T\right)
\end{array}\right]\right)^{-1}\cdot\left[\begin{array}{cc}
C\left(\bar{A}\right) & 0\\
0 & C\left(T\right)
\end{array}\right]\cdot\left[\begin{array}{cc}
\bar{A} & T\end{array}\right],
\]
where the entries of $d\left(\bar{A}\right)$ equal the number of
conserved bonds in the corresponding conserved bond transition graph,
the entries with $d\left(T\right)$ equal to the number of atom transitions
in the corresponding connected component, $C\left(\bar{A}\right)\in\{0,1\}$
is a matrix that maps each bond in a conserved bond transition graph
to a bond in the conserved molecular transition graph with incidence
matrix $\left[\begin{array}{cc}
\bar{A} & T\end{array}\right]$, and $C\left(T\right)\in\{0,1\}$ is a matrix that maps each connected
component to an atom transition of the conserved molecular transition
graph. Each connected component of the condensed conserved graph corresponds
to a distinct conserved moiety and the topology of each component
identifies the molecular topology of a conserved moiety.

All conserved moiety instances of the same conserved moiety are structurally
identical up to a permutation of their vertices (atoms) and edges
(bonds). Therefore a conserved moiety is a maximal isomorphism class
of conserved moiety instances. Formally, a conserved moiety is a maximal
isomorphism class 

\[
\mathcal{Q}_{k}\coloneqq\{\bigcup_{i\in\mathcal{I}(k)}\mathcal{Q}_{k}(\mathcal{X},\bar{\mathcal{B}},\mathcal{V}_{i})\},\text{}\text{\ensuremath{k\in\{1,...,\left|\mathcal{I}\right|\}}}.
\]
where $\mathcal{Q}_{k}(\mathcal{X},\mathcal{B},\mathcal{V}_{i})$
a conserved moiety instance in molecular species $\mathcal{V}_{i}$,
with $\left|\mathcal{X}\right|$ vertices and $\left|\bar{\mathcal{B}}\right|$
conserved bonds. Each conserved moiety instance $\mathcal{Q}_{k}(\mathcal{X},\bar{\mathcal{B}},\mathcal{V}_{i})$
is represented by an incidence matrix $Q_{k,i}\in\{-1,0,1\}^{\left|\mathcal{X}\right|\times\left|\bar{\mathcal{B}}\right|}$
that defines its molecular topology. In Section \ref{subsec:Label-preserving-isomorphism}
we stated that an instance of a conserved moiety is composed of atoms,
each of which have the same molecular label. It is possible that a
molecule contains more than one instance of the same conserved moiety
and when the topology of a conserved moiety contains more than one
atom, it is the bond(s) between atoms in each conserved moiety instance
that enables one to distinguish which atoms are part of which instance.

\subsection{\label{subsec:Example-of-conserved-structure}Example conserved moiety
topology}

Figure \ref{fig:Conserved-moieties-molecular} illustrates the molecular
topology of two conserved moieties.

\begin{figure}[H]
\begin{centering}
\includegraphics[scale=0.35]{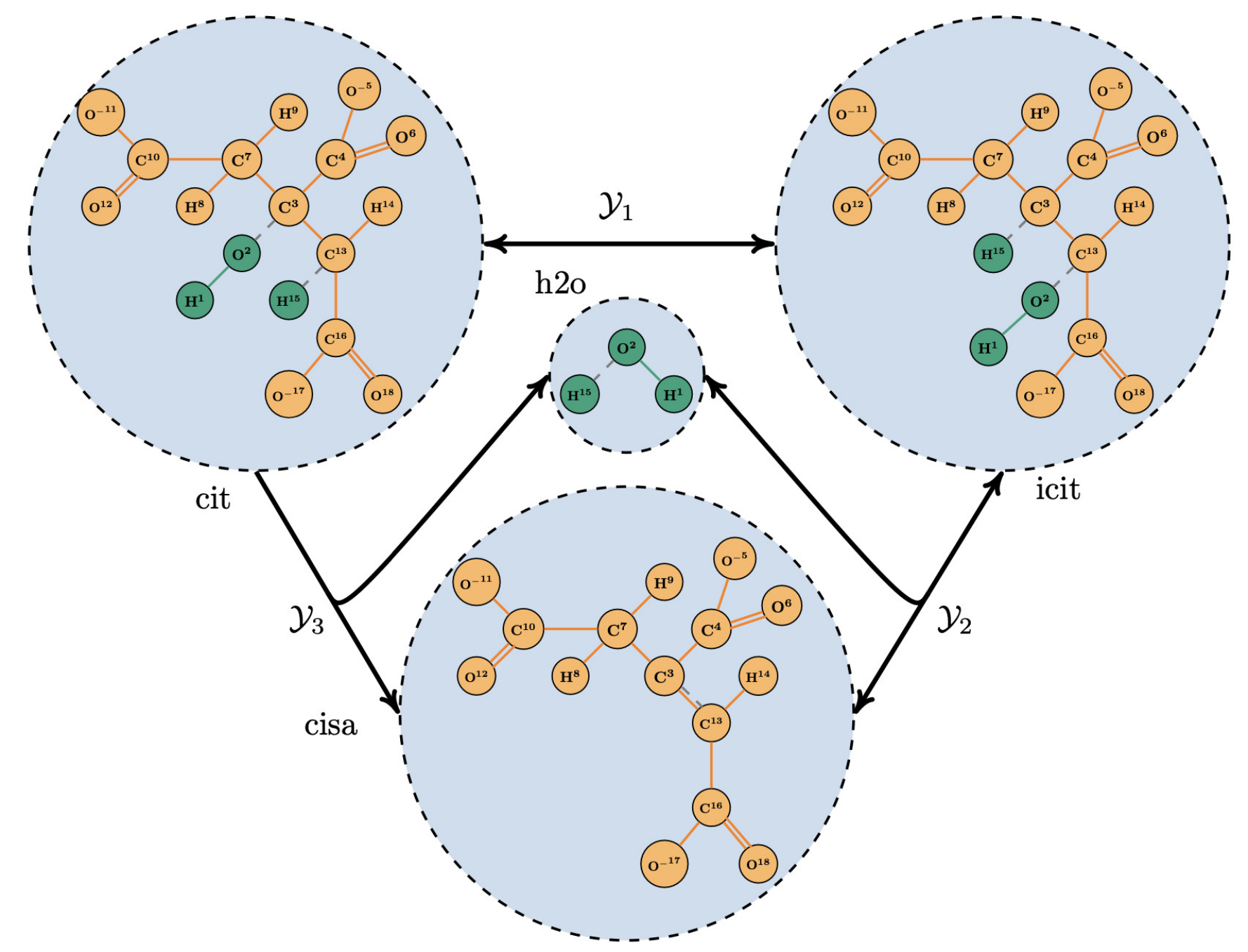}
\par\end{centering}
\caption{\textbf{\label{fig:Conserved-moieties-molecular}Conserved moiety
topology. }Each conserved moiety instance is represented by a set
of atoms and bonds. The molecular graph of the conserved moiety instance
(yellow) in the metabolite $cit$ is isomorphic to the molecular graphs
of the conserved moiety instances (yellow) in metabolites $icit$
and $cisa$. The molecular graph of the conserved moiety instance
(green) in the metabolite $cit$ is isomorphic to the molecular graphs
of the conserved moiety instances (green) in metabolites $icit$ and
$h2o$. Dashed edges represent reacting bonds, which are not part
of any conserved moiety because they are part of the reacting molecular
transition graph.}
\end{figure}

\subsection{\label{subsec:Moiety-transition-graph}Moiety transition graph}

Let $\mathcal{I}$ denote the set of maximal subgraph isomorphism
classes of an atom transition graph, and $\left|\mathcal{I}\right|$
denote the number of maximal isomorphism classes, where $k\in\{1,...,\left|\mathcal{I}\right|\}$
is an index of a maximal isomorphism class. Let $H\in\{0,1\}^{\left|\mathcal{I}\right|\times c}$
denote a mapping between $\left|\mathcal{I}\right|$ isomorphism classes
and $c$ connected components, where $H_{k,i}=1$ if isomorphism class
$k$ contains connected component $i$ and $H_{k,i}=0$ otherwise.
Let $T\in\left\{ -1,0,1\right\} ^{p\times q}$ be an incidence matrix
for an atom transition graph, then
\[
T(k,i)\coloneqq\textrm{diag}(\textrm{diag}(H_{k})\cdot C_{i})\cdot T,
\]
is the incidence matrix of the $i^{th}$ connected component of the
atom transition matrix corresponding to the $k^{th}$ isomorphism
class $\mathcal{I}(k)$, with $T(k,i)\in\{-1,0,1\}^{p\times q}$.
If the $j^{th}$ connected component of the atom transition matrix
is not part of the $k^{th}$ isomorphism class, then $T(k,j)=\{0\}^{p\times q}$.
Let component $i$ and component $j$ of the atom transition graph
belong to the same isomorphism class then there exists a label-preserving
permutation matrix between $P(i,j)\in\{0,1\}^{p\times q}$ the $i^{th}$
and $j^{th}$ connected components, such that
\[
T(k,i)=P(i,j)^{\mathrm{\mathrm{}}}T(k,j)P(i,j)^{T}
\]
which maps rows to each other that have with identical metabolite
labels. Since each connected component within an isomorphism class
has permutationally equivalent topology, we can arbitrarily choose
one incidence matrix of a connected component of the atom transition
matrix to represent the topology of each isomorphism class. This canonical
incidence matrix for the $k^{th}$ isomorphism class is denoted $T(k,\circ)$. 

Next, we show how this incidence matrix provides the topology for
the set of feasible transitions of a conserved moiety instance between
pairs of substrate and product metabolites. A moiety transition graph
$\mathcal{M}(\mathcal{X},\mathcal{\mathcal{E}},\mathcal{H},\mathcal{A})$
is a directed graph where each vertex is a conserved moiety instance
and each edge is a moiety transition between a conserved moiety instance
in a substrate molecular species and another conserved moiety instance,
of the same conserved moiety, in a product molecular species. A moiety
transition graph consists of $\left|\mathcal{I}\right|$ connected
components, each corresponding to one conserved moiety and each corresponding
to one maximal isomorphism class of an atom transition graph. In a
\emph{moiety transition graph incidence matrix} of the $k^{th}$ connected
component is

\[
M(k)\coloneqq T(k,\circ)=\left(\frac{1}{\left|\mathcal{I}(k)\right|}\right)\left(\sum_{j=1}^{\left|\mathcal{I}(k)\right|}P(i,j)^ {}T(k,j)P(i,j)^{T}\right),
\]
where the $k^{th}$ maximal isomorphism class $\mathcal{I}(k)$ of
atom transition graph $\mathcal{T}(\mathcal{X},\mathcal{\mathcal{E}},\mathcal{H})$
consists of $\left|\mathcal{I}(k)\right|$ connected components. That
is, $M(k)$ is identical to the canonical incidence matrix for the
$k^{th}$ maximal isomorphism class $\mathcal{I}(k)$ and permutationally
equivalent to each connected component in that class, where the molecular
species label of each atom is preserved. Since a moiety transition
graph $\mathcal{G}(\mathcal{X},\mathcal{\mathcal{E}},\mathcal{H})$
consists of $\left|\mathcal{I}\right|$ connected components, the
incidence matrix of a moiety transition graph $\mathcal{M}(\mathcal{X},\mathcal{\mathcal{E}},\mathcal{H},\mathcal{T})$
is 
\begin{equation}
M\coloneqq\sum_{k=1}^{\left|\mathcal{I}\right|}M(k)\label{eq:Msplit}
\end{equation}
where $M(k)$ is the incidence matrix of the $k^{th}$ connected component,
and $\left|\mathcal{I}\right|$ is the number of maximal isomorphism
classes of the corresponding atom transition graph.

When an atom or atom transition does not participate in an isomorphic
component, then the corresponding row or column of $P$ is all zeros,
respectively. It follows that $P(i,j)=0^{p\times q}$ if the $i^{th}$
and $j^{th}$ connected components are not isomorphic. As defined
above, the incidence matrix of a moiety transition graph has the same
dimensions as the incidence matrix of an atom transition graph. However,
because each conserved moiety is typically formed from more than one
connected component, one can remove its zero rows and columns and
define an incidence matrix $M\in\left\{ -1,0,1\right\} ^{u\times v},$
between a set of $u\coloneqq\left|\mathcal{X}\right|\ll p$ vertices,
each of which is a conserved moiety instance in a particular molecular
species, and $v\coloneqq\left|\mathcal{E}\right|\ll q$ edges, each
of which is a moiety transition. 

\subsection{\label{subsec:Conserved-moiety-splitting}Moiety graph decomposition
of a stoichiometric matrix}

Section \ref{subsec:Moiety-transition-graph} established a relationship
between a moiety transition graph $\mathcal{M}(\mathcal{X},\mathcal{\mathcal{E}},\mathcal{H},\mathcal{T})$
and an atom transition graph\textit{ }$\mathcal{T}(\mathcal{X},\mathcal{E},\mathcal{H})$.
This section establishes a relationship between a conserved moiety
transition graph and a stoichiometric hypergraph $\mathcal{H}(\mathcal{V},\mathcal{Y}(\mathcal{S},\mathcal{P}))$.
To this end we define two mapping matrices as follows. Let $V\in\left\{ 0,1\right\} ^{m\times u}$
denote a matrix that maps each metabolite to each conserved moiety
instance, that is $V_{i,j}=1$ if metabolite $i$ contains conserved
moiety instance $j$, and $V_{i,j}=0$ otherwise. Each column of $V$
contains a single $1$ since each conserved moiety instance is labelled
with a molecular label and is therefore specific to a particular metabolite.
Let $E\in\left\{ -1,0,1\right\} ^{v\times n}$ denote a matrix that
maps each conserved moiety transition to each reaction, that is $E_{i,j}=1$
if moiety transition $i$ occurs with the same orientation in reaction
$j$ , $E_{i,j}=-1$ if moiety transition $i$ occurs with the opposite
orientation in reaction $j$ and $E_{h,j}=0$ otherwise. 

The internal stoichiometric matrix $N$ can be expressed in terms
of $M$, $V$, and $E$ by
\begin{eqnarray}
N & = & \left(VV^{\mathrm{T}}\right)^{-1}VME.\label{eq:Decomposition}
\end{eqnarray}
Each column of $V$ contains a single $1$ so the matrix $\left(VV^{\mathrm{T}}\right)\in\mathbb{N}_{0}^{m\times m}$
is a diagonal matrix with the total number of moiety instances in
each metabolite along the diagonal. It is important to be clear that
the total number of moiety instances may consist of moiety instances
of more than one moiety. The right hand side of Eq. \ref{eq:VME}
is therefore the internal stoichiometric matrix with each row scaled
by the total number of instances of all moieties in the corresponding
metabolite. Every metabolite contains at least one moiety so $\left(VV^{\mathrm{T}}\right)$
is invertible. The decomposition in Eq. \ref{eq:Decomposition} can
more easily be interpreted by rearranging terms to obtain
\begin{equation}
\left(VV^{\mathrm{T}}\right)N=VME.\label{eq:VME}
\end{equation}
Inserting \ref{eq:Msplit} into \ref{eq:VME}, one obtains the following
decomposition of a stoichiometric matrix
\begin{eqnarray}
N & = & \left(VV^{\mathrm{T}}\right)^{-1}V\left(\sum_{k=1}^{\left|\mathcal{I}\right|}M(k)\right)E\nonumber \\
 & = & \left(VV^{\mathrm{T}}\right)^{-1}\sum_{k=1}^{\left|\mathcal{I}\right|}N(k)\label{eq:VVtNk}
\end{eqnarray}
where $N(k)$ is the $k^{th}$ moiety transition matrix, given by
\[
N(k)\coloneqq VM(k)E.
\]
Section \ref{sec:AppConserved-moiety-splitting} establishes a correspondence
between this conserved moiety decomposition of a stoichiometric matrix
and a conserved moiety splitting of a stoichiometric matrix established
previously \cite{ghaderi_structural_2020}.

\subsubsection{Example of conserved moiety splitting of a stoichiometric matrix}

Let $V\in\{0,1\}^{4\times6}$ represent the matrix that maps each
metabolite\textit{ of the network represented in Figure }\ref{subsec:Example-directed-stoichiometric}\textit{
to each moiety instance. Each row corresponds to a metabolite, and
each column corresponds to a conserved moiety instance.}

\begin{table}[H]
\centering{}%
\begin{tabular}{ll|llllllll|l}
 & \multicolumn{1}{l}{} &  &  &  &  &  &  &  & \multicolumn{1}{l}{} & \tabularnewline
 & \multicolumn{1}{l}{} &  & $L_{1}(h2o)$ & $L_{1}(cit)$ & $L_{1}(icit)$ & $L_{2}(cit)$ & $L_{2}(icit)$ & $L_{2}(cisa)$ & \multicolumn{1}{l}{} & \tabularnewline
\cline{3-3}\cline{10-10}
\multirow{4}{*}{$V=$} & $h2o$ &  & 1 & 0 & 0 & 0 & 0 & 0 &  & \multirow{4}{*}{}\tabularnewline
 & $cit$ &  & 0 & 1 & 0 & 1 & 0 & 0 &  & \tabularnewline
 & $icit$ &  & 0 & 0 & 1 & 0 & 1 & 0 &  & \tabularnewline
 & $cisa$ &  & 0 & 0 & 0 & 0 & 0 & 1 &  & \tabularnewline
\cline{3-3}\cline{10-10}
 & \multicolumn{1}{l}{} &  &  &  &  &  &  &  & \multicolumn{1}{l}{} & \tabularnewline
\end{tabular}
\end{table}

Similarly, let $E\in\{-1,0,1\}^{6\times3}$ denote the matrix that
maps each moiety transition to each reaction in\textit{ }Figure\textit{
}\ref{subsec:Example-directed-stoichiometric}. Here, each row represents
a conserved moiety transition, and each column represents a reaction.

\begin{table}[H]
\centering{}%
\begin{tabular}{ll|lllll|l}
 & \multicolumn{1}{l}{} &  & $\mathcal{Y}_{1}$ & $\mathcal{Y}_{2}$ & $\mathcal{Y}_{3}$ & \multicolumn{1}{l}{} & \tabularnewline
\cline{3-3}\cline{7-7}
 & $M_{1}$ &  & 1 & 0 & 0 &  & \tabularnewline
\multirow{4}{*}{$E=$} & $M_{2}$ &  & 1 & 0 & 0 &  & \multirow{2}{*}{}\tabularnewline
 & $M_{3}$ &  & 0 & 1 & 0 &  & \tabularnewline
 & $M_{4}$ &  & 0 & 1 & 0 &  & \tabularnewline
 & $M_{5}$ &  & 0 & 0 & 1 &  & \tabularnewline
 & $M_{6}$ &  & 0 & 0 & 1 &  & \tabularnewline
\cline{3-3}\cline{7-7}
\end{tabular}
\end{table}

Furthermore, let $M\in\{-1,0,1\}^{6\times6}$ be the incidence matrix
of the moiety defined in Figure \ref{fig:Conserved-moieties-molecular}.
In this matrix, each row represents a conserved moiety instance and
each column represents a conserved moiety transition.

\begin{table}[H]
\centering{}%
\begin{tabular}{ll|llllllll|l}
 & \multicolumn{1}{l}{} &  & $M_{1}$ & $M_{2}$ & $M_{3}$ & $M_{4}$ & $M_{5}$ & $M_{6}$ & \multicolumn{1}{l}{} & \tabularnewline
\cline{3-3}\cline{10-10}
 & $L_{1}(h2o)$ &  & 0 & 0 & 1 & 0 & 1 & 0 &  & \tabularnewline
\multirow{4}{*}{$M=$} & $L_{1}(cit)$ &  & -1 & 0 & 0 & 0 & -1 & 0 &  & \multirow{2}{*}{}\tabularnewline
 & $L_{1}(icit)$ &  & 1 & 0 & -1 & 0 & 0 & 0 &  & \tabularnewline
 & $L_{2}(cit)$ &  & 0 & -1 & 0 & 0 & 0 & -1 &  & \tabularnewline
 & $L_{2}(icit)$ &  & 0 & 1 & 0 & -1 & 0 & 0 &  & \tabularnewline
 & $L_{2}(cisa)$ &  & 0 & 0 & 0 & 1 & 0 & 1 &  & \tabularnewline
\cline{3-3}\cline{10-10}
\end{tabular}
\end{table}

Finally, the sample matrix calculation shows that $\left(VV^{T}\right)N=VME.$

\section{Reacting moieties}

In this section, we shall define a \textit{reacting moiety} as a particular
set of reacting bonds. Although each chemical reaction involves a
set of reacting bonds, some reactions share reacting bonds that are
isomorphic up to labelling of their associated reactions, so we aim
to identify a minimal set of reactions that cover all reacting bonds
and then define a reacting moiety as a set of reacting bonds corresponding
to a reaction in that minimal set. Therefore, first we condense similar
parts of the reacting molecular transition graph, then we formulate
a minimal set cover problem to identify a minimal number of reactions
and hence a minimal number of reacting moieties.

\subsection{Condensation of a reacting molecular transition graph}

Consider a reacting molecular transition graph $\hat{\mathcal{\mathcal{L}}}(\mathcal{X},\hat{\mathcal{B}},\mathcal{E},\mathcal{H})$,
where each vertex is an (ambivorous) atom and each edge is either
an atom transition or a reacting bond\textit{. Contract} each component
of the reacting molecular transition graph $\hat{\mathcal{\mathcal{L}}}(\mathcal{X},\hat{\mathcal{B}},\mathcal{E},\mathcal{H})$
that is strongly connected by a set of atom transitions, into a single
vertex to generate a reacting moiety graph $\hat{\mathcal{\mathcal{L}}}(\underline{\mathcal{X}},\hat{\mathcal{B}},\mathcal{E},\mathcal{H})\coloneqq\underline{\hat{\mathcal{L}}}$
(cf graph condensation in Section \ref{subsec:Graph-condensation}).
Each vertex of $\underline{\hat{\mathcal{L}}}$ represents a strongly
connected component of the atom transition graph $\mathcal{T}(\mathcal{X},\mathcal{\mathcal{E}},\mathcal{H})$.
Each edge of $\underline{\hat{\mathcal{L}}}$ corresponds to a reacting
bond between a pair of vertices, each representing a contracted connected
component of an atom transition graph. Each edge is labelled with
the set of reactions corresponding to the reacting bond (i.e. where
that bond is broken, formed, or changes order). This condensation
may be represented by
\[
\underline{\hat{B}}\coloneqq\textrm{diag}(d)^{-1}\cdot C\cdot\hat{B},
\]
where $d\in\mathbb{N}^{p}$ with $d_{i}$ equal to the number of atoms
in the $i^{th}$ connected component, $C\in\{0,1\}^{p\times c}$ is
a matrix that maps each connected component to an atom of the molecular
transition graph with incidence matrix $\hat{B}$, and $\underline{\hat{B}}$
is the incidence matrix of the reacting moiety graph $\underline{\hat{\mathcal{L}}}$,
where each edge is a reacting bond, represented without considering
molecular specificity. It is important to note that in the case of
a single reaction, the reacting moiety graph $\underline{\hat{\mathcal{L}}}$
is equivalent to the graph derived from the reaction matrix defined
in Section \ref{subsec:reactionMatrix}. 

\subsection{Example of a reacting moiety graph}

Figure \ref{fig:Condensed-reacting-graph} illustrates the reacting
moiety graph of the reaction shown in Figure \ref{fig:Substrate and product matrices}.
It is important to note that this graph is equivalent to the graph
derived from the reaction matrix $D$, without considering the sign
of its entries.

\begin{figure}
\selectlanguage{american}%
\hspace{6cm}\foreignlanguage{british}{\includegraphics[scale=0.3]{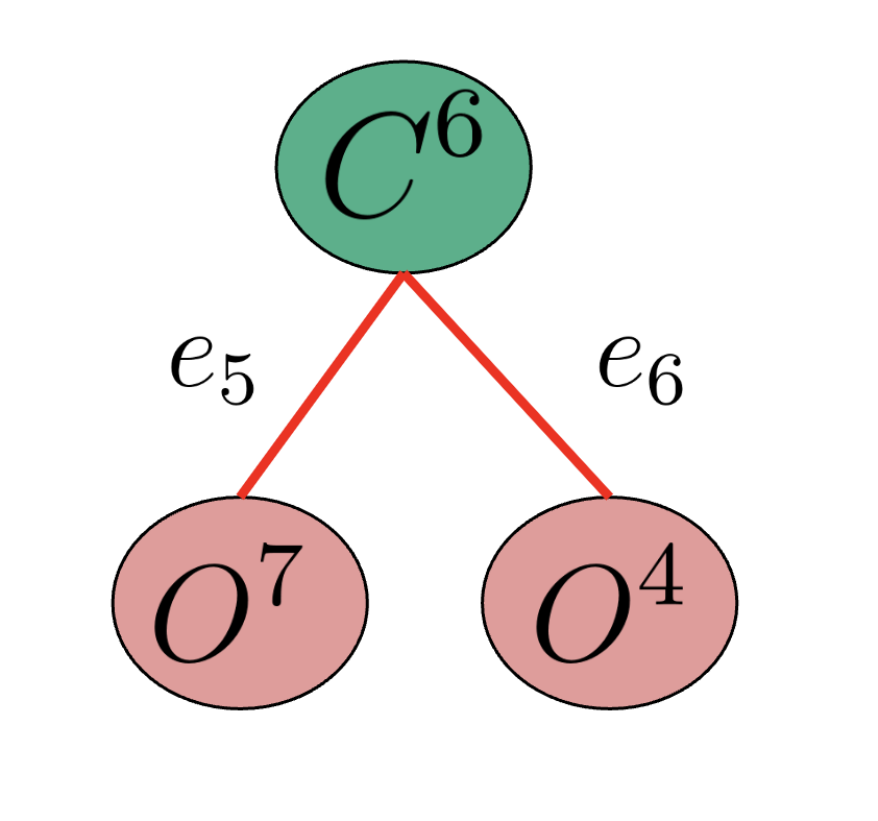}\caption{\textbf{\label{fig:Condensed-reacting-graph}Reacting moiety graph.
}Each atom transition component is condensed into a single node representing
an atom without molecular specification, while each edge represents
a reacting bond in the reaction (see Figure \ref{fig:Substrate and product matrices}).}
}\selectlanguage{british}%
\end{figure}

\subsection{Minimal set cover of a reacting moiety graph}

The minimal number of reactions to cover all associated reacting bonds
in the reacting moiety graph can be obtained from a solution to a
minimal set cover problem. Consider a reacting moiety graph $\underline{\hat{\mathcal{L}}}$,
where each vertex represents a contracted ambivorous atom and each
edge represents a reacting bond. For practical implementation, this
problem can be formulated in matrix form by defining an incidence
matrix $A\in\{0,1\}^{d\times n}$, where 

\[
A_{ij}\coloneqq\begin{cases}
1 & \text{if bond \ensuremath{b_{j}}is involved in reaction\ensuremath{\mathcal{Y}_{i}}},\\
0 & \textrm{otherwise.}
\end{cases}
\]

Let $x\in\{0,1\}^{n}$be a binary decision vector, where 

\[
x_{i}\coloneqq\begin{cases}
1 & \mathcal{Y}_{i}\text{is in the cover},\\
0 & \textrm{otherwise.}
\end{cases}
\]

Also, let $b\in\mathbb{N}^{d\times1}$ be a vector of ones, since
every bond must be covered at least once. The minimal set cover problem
can then be mathematically formulated as

\[
\begin{array}{c}
\underset{x\in\{0,1\}^{n}}{min}\mathds{1}^{T}x,\\
s.t.\;Ax\geqslant b,
\end{array}
\]
where $\mathds{1}^{T}$ is the transpose of the all ones vector. The
objective function minimises the number of reactions needed to cover
all reacting bonds, ensuring each reacting bond is included in at
least one selected reaction. Solving this integer linear programming
(ILP) problem identifies a minimal reaction set, where each reaction
in that minimal set identifies a reacting moiety by the reacting bonds
it is associated with.

\subsection{\label{subsec:Example-of-reacting}Example of reacting moieties}

Figure \ref{fig:Reacting-transitions-graph.} illustrates the reacting
moiety graph of the biochemical network introduced in Section \ref{subsec:Example-directed-stoichiometric}.
Each vertex represents a contracted ambivorous atom and each edge
represents a reacting bond, associated with one or more reactions
as illustrated in Figure \ref{fig:Reacting-moieties-molecular}. Each
reaction within this network is represented as a set of reacting bonds
that either break or form during a reaction. The minimal set cover
of this graph identifies the minimal subset of reactions that cover
all reacting bonds. 

\selectlanguage{american}%
\begin{figure}[H]
\hspace{3cm}\includegraphics[scale=0.3]{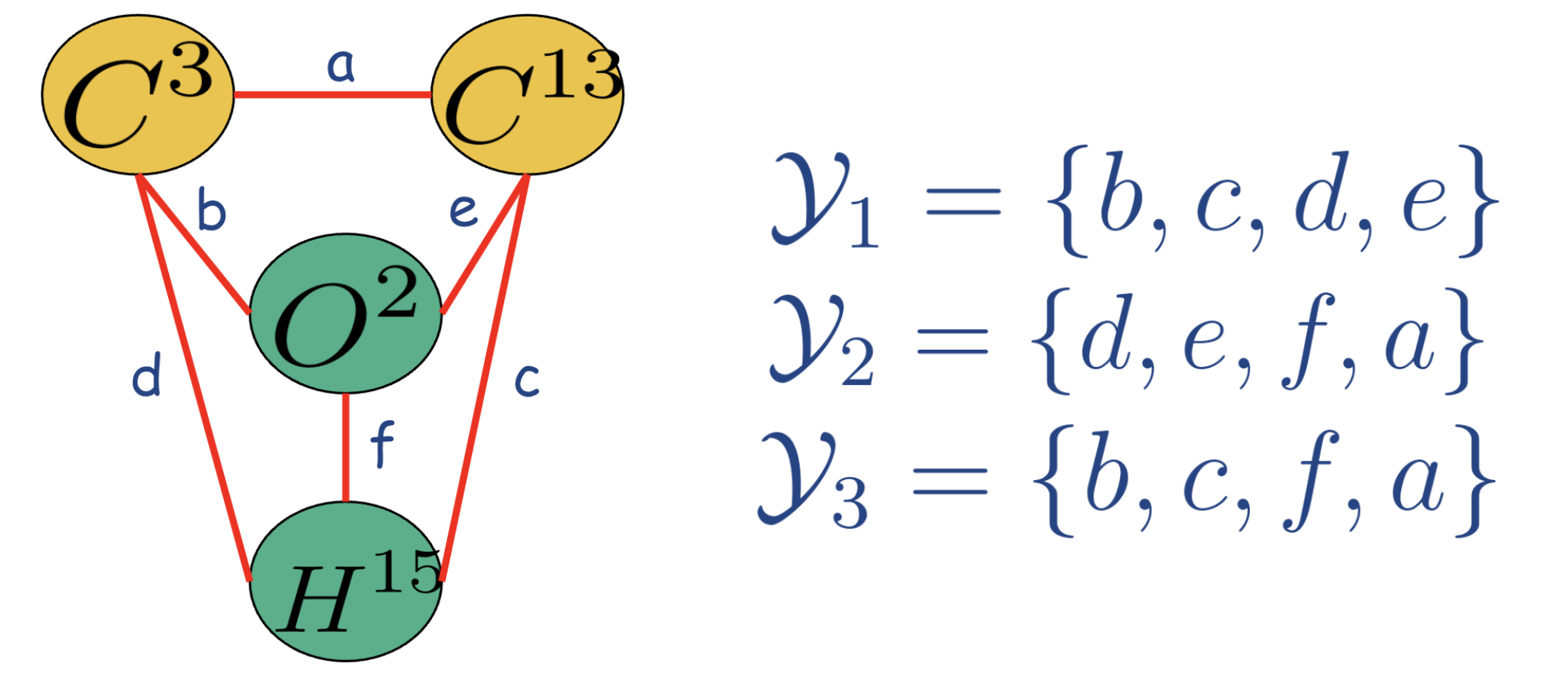}

\selectlanguage{british}%
\caption{\textbf{\label{fig:Reacting-transitions-graph.}Reacting moiety graph.
}Each vertex represents a contracted ambivorous atom and each edge
corresponds to a reacting bond. Each reaction is defined by the reacting
bonds involved in the transformation. For instance, in reaction $\mathcal{Y}_{1}$
, two bonds (b:$C^{3}-O^{2}$ and c: $C^{13}-H^{15}$) are broken,
while two new bonds (\ensuremath{d}: $C^{3}-H^{15}$ and \ensuremath{e}:
$C^{13}-O^{2}$) are formed. In reaction $\mathcal{Y}_{2}$ , two
bonds (d: $C^{3}-H^{15}$ and e: $C^{13}-O^{2}$ are broken, and two
new bonds (f: $O^{2}-H^{15}$ and a: $C^{3}-C^{13}$) are formed.
Lastly, in reaction $\mathcal{Y}_{3}$ , two bonds (b: $C^{3}-O^{2}$
and c: $C^{13}-H^{15}$) are broken, while two new bonds (f: $O^{2}-H^{15}$
and a : $C^{3}-C^{13}$) are formed. The set of reacting bonds for
reaction $\mathcal{Y}_{1}=\{b,c,d,e\}$ and reaction $\mathcal{Y}_{3}=\{b,c,f,a\}$
cover all the reacting bonds in the network. The reacting bonds corresponding
to the reaction $\mathcal{Y}_{1}$ between the pairs of atoms $(C^{3}-O^{2})$,
$(C^{13}-H^{15})$, ($C^{3}-H^{15})$, and $(C^{13}-O^{2})$ form
one reacting moiety illustrated in Figure \ref{fig:Reacting-moieties-of}
(a), and the reacting bonds corresponding to the reaction $\mathcal{Y}_{3}$
between the pairs of atoms $(C^{3}-O^{2})$, $(C^{13}-H^{15})$,$(O^{2}-H^{15})$,
and $(C^{3}-C^{13})$ form the second reacting moiety represented
in Figure \ref{fig:Reacting-moieties-of} (b). Note that (a: $C^{3}-C^{13}$
and f: $O^{2}-H^{15}$) are between atoms of a conserved moiety, while
the other reacting bonds are between instances of a pair of distinct
conserved moieties.}
\selectlanguage{american}%
\end{figure}

\selectlanguage{british}%
\begin{figure}
\subfloat[]{\includegraphics[scale=0.7]{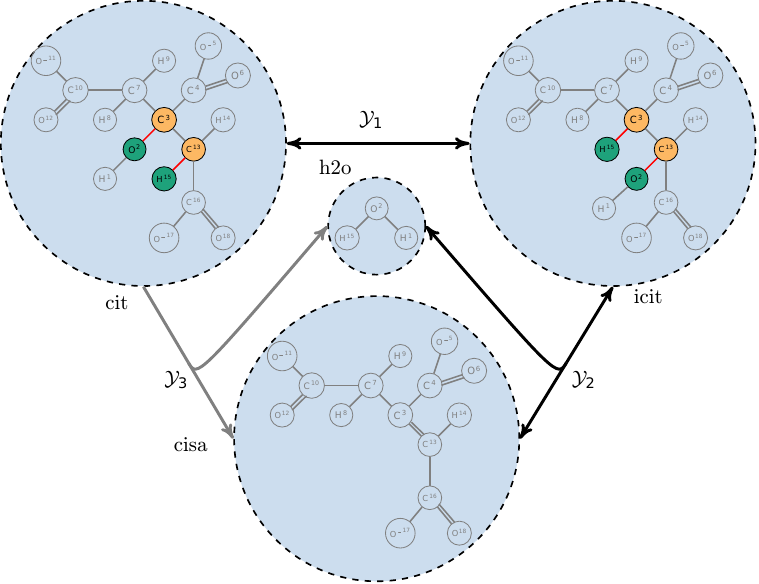}

}\subfloat[]{\includegraphics[scale=0.7]{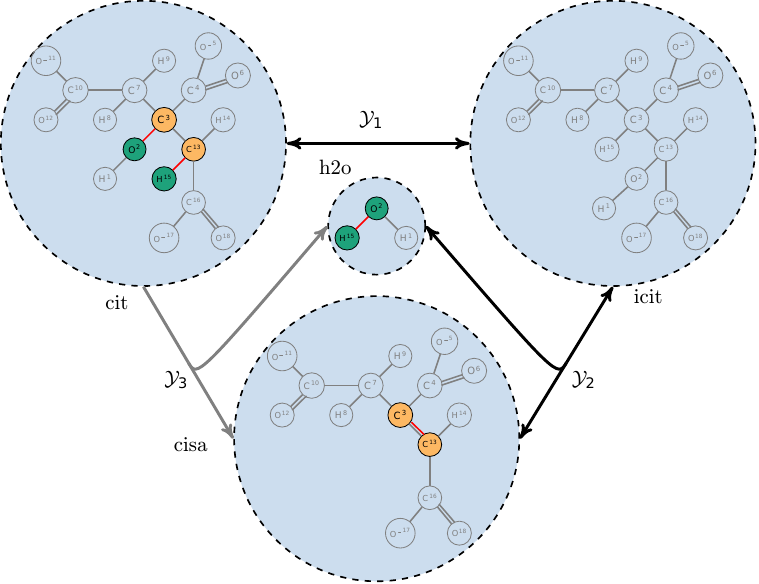}}\caption{\textbf{\label{fig:Reacting-moieties-of}Reacting moieties of a molecular
graph. }(a) Reacting bonds corresponding to reaction $\mathcal{Y}_{1}$
form the first reacting moiety. (b) Reacting bonds corresponding to
reaction $\mathcal{Y}_{3}$ make up the second reacting moiety. Atoms
are highlighted to emphasise particular bonds, but an atom is not
a component of a reacting moiety.}
\end{figure}

\section{Discussion}

\paragraph{Characterisation of conserved moieties }

Previously, we developed methods to identify the atoms in a conserved
moiety \cite{haraldsdottir_Identification_2016} and the set of conserved
moieties for a given network \cite{ghaderi_structural_2020} but
the structure of each conserved moiety was not specified. Herein we
identify the topology of each conserved moiety, in terms of conserved
bonds, that are invariant with respect to all of the chemical transformations
in a network. Previously, we demonstrated how a stoichiometric matrix
could be split into the sum of a set of moiety transition matrices
\cite{ghaderi_structural_2020}. However, there was no guarantee
that each moiety transition matrix corresponded to an incidence matrix
of a graph. Herein, we introduce a moiety transition graph, whose
incidence matrix is a graph and each connected component of a moiety
transition graph corresponds to a distinct conserved moiety. 

\paragraph{Characterisation of reacting moieties}

We presented the first linear algebraic and graph theoretical definition
of a reacting moiety, in terms of reacting bonds, that are either
broken or formed by at least one reaction in a network. This contrasts
with established approaches is that define reaction centres, reaction
sites, or the like, in heuristic terms that do not admit an unambiguous
mathematical interpretation. We introduced the novel concept of a
reacting moiety graph, where each vertex is an atom and each edge
corresponds to a bond that is either broken or formed in a network.
While we use it to identify a minimal set of reacting moieties, it
is envisaged to lead to novel theoretical applications, e.g., estimation
of thermodynamic properties of biochemical networks. 

\paragraph{Hypergraphs versus graphs}

The use of an atom and molecular transition graphs instead of a stoichiometric
hypergraph alone is motivated by the theoretical and computational
benefits offered by working with graphs. Graphs theory provides a
well established and comprehensive theoretical framework with numerous
algorithms optimised to efficiently solve a wide variety of problems
involving graphs, e.g., graph isomporphism, minimal set cover. In
contrast, in general, a hypergraph may have arbitrarily complex topology
so with less structure to exploit, there are comparatively far fewer
theoretical results and algorithms available for solving problems
involving hypergraphs.

From a biochemical perspective, it is natural to consider a graph
of conserved moiety transitions, as by definition, a conserved moiety
is an invariant chemical (sub)structure. We also demonstrate that
a graph is the appropriate conceptual structure to represent a reacting
moiety graph as it it built from edges representing chemical bonds.
Moreover, Section \ref{subsec:Conserved-moiety-splitting} demonstrates
that while a metabolic network is a hypergraph, its hypergraph incidence
matrix can be decomposed into a set of graphs, which is not the case
for a hypergraph in general. This has profound implications for mathematical
modelling of biochemical networks, primarily because most mathematical
modelling approaches assume a stoichiometric matrix is an arbitrary
rectangular matrix, thereby failing to exploit its special structure
to generate novel theoretical results that would not hold for arbitrary
rectangular matrices.

\paragraph{Atom mapping}

Accurate identification of conserved and reacting moieties depends
on accurate atom mappings. However, predicting accurate atom mappings
for every metabolic reaction in a genome-scale model is a challenging
cheminformatic problem due to the complexity and heterogeneity of
reaction networks. Lumped reactions, each involving a series of enzyme
catalysed reactions condensed into one reaction, should ideally be
split into a series of reactions prior to atom mapping. Molecular
symmetries can give rise to multiple valid atom mappings for a reaction,
each of which should be included. Cellular conditions can also affect
atom mappings by altering reacting mechanisms, making it difficult
to algorithmically predict the appropriate mapping for a particular
condition. Accurate identification of conserved and reacting moieties
also depends on accurate biochemical network reconstructions that
faithfully represent the underlying biochemical network. 

In our approach, we base our analysis on incorporation of information
on molecular species topology (2D MOL files) which does not take into
account geometric differences, such as stereoisomers or chirality.
This limitation could be addressed by incorporating molecular geometry
and using stereochemically-aware atom mapping algorithms that account
for spatial arrangements of atoms, such as bond angles and chiral
centres, to accurately capture stereoisomeric transformations. Future
developments in atom mapping algorithms, are necessary but not sufficient
to improve the accuracy of the atom mappings. It is also necessary
that novel atom mapping algorithms are implemented and disseminated
as accessible, interoperable and reusable software. To accurately
describe stereoisomeric transformations involving single atoms, where
there is no net cleavage or formation of bonds, the method presented
herein would need to be extended to incorporate molecular geometry.

\paragraph{Future work}

Taken together, characterisation of conserved and reacting moieties,
both in terms of their atom-bond topology and their relationship to
stoichiometric hypergraph topology provides a strong theoretical foundation,
grounded in (linear) algebraic graph theory, for novel developments
in the foundsations and applications of biochemical network analysis.
Fundamentally, it will be important to characterise, for a given biochemical
network, how the number of conserved and reacting moieties relates
to the dimensions of a stoichiometric matrix and its four fundamental
subspaces \cite{clarke_stoichiometric_1988}. 

In terms of applications, expressing a stoichiometric matrix in terms
of a set of conserved moiety graphs has already lead to the development
of conserved moiety fluxomics, a novel, efficient, mathematically
transparent, and computationally efficient method to infer metabolic
reaction flux at genome-scale\cite{fleming_conserved_2024}. Other
potential applications involve representation of reaction mechanisms
as constrained combinations of conserved and reacting moieties. For
example, certain biochemical networks result in combinatorial explosion
in the dimensions of a stoichiometric matrix. In such scenarios, a
more compact representation in terms of combinations of conserved
and reacting moieties is envisaged, since, in numerical experiments
with genome-scale metabolic networks, we observe that the number of
conserved moieties, $k$, is substantially less than the number of
molecular species, that is $k\ll m<n$ \cite{haraldsdottir_Identification_2016}.
Ideally, a reformulation in terms of conserved and reacting moieties
should be equivalent to that of a stoichiometric representation, which
will require constraints on the feasible set of moiety combinations,
e.g. combinations must be non-negative, integral and correspond to
chemically and biochemically feasible molecular topologies and reaction
mechanisms.

Despite the challenges with acquisition of sufficiently accurate input
data, particularly in large reaction networks, we emphasise the indispensability
of mathematical tools for identifying conserved and reacting moieties
in advancing our knowledge of reaction mechanisms and the behaviour
of biochemical networks Characterisation of biochemical reactions
in terms of conserved and reacting moieties opens a novel window to
further analysis of biochemical networks and bridges the gap between
graph theory, linear algebra, and biological interpretation, opening
new horizons in the study of chemical reaction networks.

More generally, the integration of established mathematical theories
and algorithms into biological systems is essential for understanding
complex biological processes. To ensure a meaningful interpretation
of the results, mathematical models must hold biological significance.
By giving biological meaning to these models, they become powerful
tools for predicting the behaviour of biological systems, which can
then be validated through real-world experiments. This feedback loop
between model predictions and experimental validation deepens our
understanding of system dynamics and enhances decision-making in various
biological applications. 

\section{Conclusion}

A conserved moiety is a chemical substructure that remains invariant
with respect to all of the chemical transformations in a chemical
reaction network. A reacting moiety is a set of bonds that are either
broken or formed in a chemical reaction network. We developed a novel
method to identify and characterise the topology of conserved and
reacting moieties in algebraic graph theoretical terms. This approach
enabled a correspondence to be established between each conserved
moiety as a member of a minimal set of distinct invariant chemical
substructures and each reacting moiety is a member of a minimal set
of distinct variant chemical substructures. Representation of a chemical
reaction network in terms of conserved and reacting moieties is a
fundamental result in the analysis of such networks. This approach
has already lead to new applications, e.g., inference of metabolic
flux by modelling the transitions of isotopically labelled conserved moieties, and is envisaged to stimulate the development of novel applications
of chemical reaction network models firmly grounded in mathematics.

\section*{Acknowledgment}

This paper is supported by European Union\textquoteright s Horizon
2020 research and innovation program under the Marie Sk\l odowska-Curie
grant agreement PoLiMeR, No 812616, the European Union's Horizon Europe
research and innovation program project Recon4IMD (grant number 101080997)
and the European Research Council (ERC) under the European Union\textquoteright s
Horizon 2020 research and innovation programme (grant number 757922
and 101125633) to Ines Thiele.

\section*{Author Contribution}

\textbf{Hadjar Rahou}, Conceptualisation, Formal analysis, Visualisation,
Writing - review \& editing; \textbf{Hulda S. Haraldsdóttir}, Conceptualisation,
Writing - review \& editing; \textbf{Filippo Martinelli}, review \&
editing; \textbf{Ines Thiele}, review \& editing; \textbf{Ronan M.T.
Fleming}, Conceptualisation, Funding acquisition, Supervision, Validation,
Writing - original draft, review \& editing.

%\clearpage{}

\bibliographystyle{unsrt}
%\bibliography{Polimer}

%\printbibliography

\appendix

\section{Partitioning a molecular graph}

Given a molecular graph $\mathcal{G}(\mathcal{X},\mathcal{B})$, the
corresponding molecular incidence matrix $B$, and $\left|\mathcal{I}\right|$
isomorphism classes, each atom in the molecular graph belongs to one
isomorphism class. Then, the rows of the incidence matrix $B$ are
partitioned into $\left|\mathcal{I}\right|$ partitions, where each
partition represents a set of atoms belonging to one isomorphism class.

This induces a partition of the bonds into $\left|\mathcal{I}\right|+1$
partitions. That is, each partition of $\mathcal{I}$ represents the
set of bonds in the corresponding isomorphism class. The $(\left|\mathcal{I}\right|+1)^{th}$
partition contains the reacting bonds. That is the $(\left|\mathcal{I}\right|+1)^{th}$
partition is the ``cut''; that is, they have some atoms in different
isomorphism classes. Then, without loss of generality, the molecular
graph incidence matrix $B$ may be partitioned as follows

\begin{table}[H]
\centering{}%
\begin{tabular}{ll|lllll|lll|llll|lllll|lllll}
 & \multicolumn{1}{l}{} &  & \multicolumn{10}{l}{$\text{\hspace{1.5cm}Conserved bonds\hspace{1.5cm}}$} & \multicolumn{1}{l}{} & \multicolumn{3}{l}{Reacting bonds} &  & \multicolumn{1}{l}{} &  &  &  &  & \tabularnewline
 & \multicolumn{1}{l}{} &  & $e_{1}$ & $\ldots$ & $e_{i}$ & \multicolumn{1}{l}{} & $e_{i+1}$ & $\ldots$ & \multicolumn{1}{l}{$e_{i+k'}$} & $e_{i+k'+1}$ & $\ldots$ & $e_{q-r}$ & \multicolumn{1}{l}{} & $e_{q-r+1}$ & $\ldots$ & $e_{q}$ &  & \multicolumn{1}{l}{} &  &  &  &  & \tabularnewline
\cline{3-3}\cline{19-19}
\multirow{3}{*}{} &  &  &  &  &  &  &  &  &  &  &  &  &  &  &  &  &  &  & \multicolumn{1}{l}{$a_{1}$} &  &  &  & \tabularnewline
 &  &  &  & $B_{1}$ &  &  &  & $0$ &  &  & $0$ &  &  &  &  &  &  &  & $\vdots$ &  &  &  & \tabularnewline
 &  &  &  &  &  &  &  &  &  &  &  &  &  &  &  &  &  &  & $a_{j}$ &  &  &  & \tabularnewline
\cline{4-13}
 &  &  &  &  &  &  &  &  &  &  &  &  &  &  &  &  &  &  &  &  &  &  & \tabularnewline
$B:=$ &  &  &  & $0$ &  &  &  & $\ddots$ &  &  & $0$ &  &  &  & $Z$ &  &  &  & $\vdots$ &  &  &  & \tabularnewline
 &  &  &  &  &  &  &  &  &  &  &  &  &  &  &  &  &  &  &  &  &  &  & \tabularnewline
\cline{4-13}
 &  &  &  &  &  &  &  &  &  &  &  &  &  &  &  &  &  &  & $a_{j+k}$ & \multirow{3}{*}{} &  &  & \tabularnewline
 &  &  &  & $0$ &  &  &  & $0$ &  &  & $B_{\mathcal{I}}$ &  &  &  &  &  &  &  & $\vdots$ &  &  &  & \tabularnewline
 &  &  &  &  &  &  &  &  &  &  &  &  &  &  &  &  &  &  & $a_{p}$ &  &  &  & \tabularnewline
\cline{3-3}\cline{19-19}
 & \multicolumn{1}{l}{} &  &  &  &  & \multicolumn{1}{l}{} &  &  & \multicolumn{1}{l}{} &  &  &  & \multicolumn{1}{l}{} &  &  &  &  & \multicolumn{1}{l}{} &  &  &  &  & \tabularnewline
\end{tabular}
\end{table}

The first columns of matrix $B$ correspond to conserved bonds, and
the second columns correspond to reacting bonds. Each incidence matrix
$B_{1\leq k\leq\mathcal{I}}$ represents the molecular conserved moiety
species subgraph, where each vertex is an atom and each edge is a
conserved bond, while, the matrix $Z$ represents the molecular subgraph
corresponding to the reacting bonds.

\section{\label{sec:AppConserved-moiety-splitting}Conserved moiety splitting}

\subsection{Conserved moiety matrix}

Given a stoichiometric matrix $N\in\mathbb{Z}^{m\times n}$ corresponding
to a directed stoichiometric hypergraph $\mathcal{H}(\mathcal{V},\mathcal{Y}(\mathcal{S},\mathcal{P})))$.
The conserved moiety matrix $L\in\mathbb{Z}_{+}^{r\times m}$ derived
from the corresponding atom transition graph $\mathcal{G}(\mathcal{X},\mathcal{E},\mathcal{H})$
is orthogonal to $\mathcal{R}(N)$(the column subspace), that is,
$L\cdot N=0$.

\subsubsection{Example conserved moiety matrix}

The conserved moiety matrix corresponding to Figure \ref{fig:iso_lable_moities}
is 
\begin{table}[H]
\centering{}%
\begin{tabular}{l|cccccc|l}
\multicolumn{1}{l}{} &  & $h2o$ & $cit$ & $\mathrm{icit}$ & $cisa$ & \multicolumn{1}{c}{} & \tabularnewline
\cline{2-2}\cline{7-7}
\multirow{2}{*}{$L:=$} &  & 1 & 1 & 1 & 0 &  & $L_{1}$\tabularnewline
 &  & 0 & 1 & 1 & 1 &  & $L_{2}$\tabularnewline
\cline{2-2}\cline{7-7}
\end{tabular}
\end{table}
The first and second conserved moiety vectors, $L_{1}$ and $L_{2}$
correspond to two isomorphism classes (green and yellow) in Figure
\ref{fig:iso_lable_moities}. The invariance of the number of moieties
with respect to each reaction is illustrated with 
\begin{table}[H]
\centering{}%
\begin{tabular}{lllccccll|crrrc|llll}
 &  &  &  &  &  &  &  & \multicolumn{1}{l}{} &  & $\mathcal{Y}_{1}$ & $\mathcal{Y}_{2}$ & $\mathcal{Y}_{3}$ & \multicolumn{1}{c}{} &  &  &  & \tabularnewline
\cline{10-10}\cline{14-14}
 &  &  & $h2o$ & $cit$ & $\mathrm{icit}$ & $cisa$ &  &  &  & 0 & 1 & 1 &  &  &  &  & \tabularnewline
\cline{3-3}\cline{8-8}\cline{16-16}\cline{18-18}
\multirow{2}{*}{$LN=$} & \multicolumn{1}{l|}{$L_{1}$} &  & 1 & 1 & 1 & 0 & \multicolumn{1}{l|}{} & \multirow{2}{*}{$\cdot$} &  & -1 & 0 & -1 &  & \multicolumn{1}{l|}{\multirow{2}{*}{=}} &  & 0 & \multicolumn{1}{l|}{}\tabularnewline
 & \multicolumn{1}{l|}{$L_{2}$} &  & 0 & 1 & 1 & 1 & \multicolumn{1}{l|}{} &  &  & 1 & -1 & 0 &  &  &  & 0 & \multicolumn{1}{l|}{}\tabularnewline
\cline{3-3}\cline{8-8}\cline{16-16}\cline{18-18}
 &  &  &  &  &  &  &  &  &  & 0 & 1 & 1 &  &  &  &  & \tabularnewline
\cline{10-10}\cline{14-14}
\end{tabular}
\end{table}

\subsection{Correspondence with conserved moiety splitting}

As established previously \cite{ghaderi_structural_2020}, given
an \textit{atom transition graph }$\mathcal{G}(\mathcal{X},\mathcal{\mathcal{E}},\mathcal{H})$
between a set of molecules $\mathcal{V}$, where $m\coloneqq\left|\mathcal{V}\right|$,
a conserved moiety vector $L_{k}\in\mathbb{Z}_{+}^{1\times m}$ is
a non-negative integer (row) vector, where $L_{k,i}$ is the number
of instances of the $k^{th}$ conserved moiety in molecule $\mathcal{V}_{i}$.
As there is one conserved moiety vector for each maximal graph isomorphism
class, an \textit{atom transition graph} gives rise to a set of $\left|\mathcal{I}\right|$
conserved moiety vectors, which can be concatenated to form a conserved
moiety matrix $L\in\mathbb{Z}_{+}^{\left|\mathcal{I}\right|\times m}$,
which is orthogonal to $\mathcal{R}(N)$, that is $L\cdot N=0$. Furthermore,
the following matrix splitting exists

\begin{equation}
N=\mathrm{diag^{-1}}\left(L^{T}\mathbbm{1}\right)\sum_{k=1}^{\left|\mathcal{I}\right|}N(k),\label{eq:Nsplit}
\end{equation}
where $N(k)\in\mathbb{Z}^{m\times n}$ is a moiety transition matrix,
given by
\begin{equation}
N(k)\coloneqq\textrm{diag}(L_{k})N.\label{def:N(k)}
\end{equation}
 Comparing \ref{eq:VVtNk} and \ref{eq:Nsplit}, we conclude that
\[
N=\left(VV^{T}\right)^{-1}\sum_{k=1}^{\left|\mathcal{I}\right|}N(k)=\mathrm{diag^{-1}}\left(L^{T}\mathbbm{1}\right)\sum_{k=1}^{\left|\mathcal{I}\right|}N(k)
\]
 and that 
\begin{eqnarray*}
N(k) & = & \textrm{diag}(L_{k})N=VM(k)E\\
\mathrm{diag}\left(L^{T}\mathbbm{1}\right) & = & VV^{T}
\end{eqnarray*}
establishing an equivalence between both formulations for splitting
a stoichiometric matrix.

\subsubsection{Example}
\begin{flushleft}
{\footnotesize{}%
\begin{tabular}{ll|lll|l|llll|l|lll|c}
 & \multicolumn{1}{l}{} & {\footnotesize$\mathcal{Y}_{1}$} & {\footnotesize$\mathcal{Y}_{2}$} & \multicolumn{1}{l}{{\footnotesize$\mathcal{Y}_{3}$}} & \multicolumn{1}{l}{} & {\footnotesize$h2o$} & {\footnotesize$cit$} & {\footnotesize$\mathrm{icit}$} & \multicolumn{1}{l}{{\footnotesize$cisa$}} & \multicolumn{1}{l}{} &  &  & \multicolumn{1}{l}{} & \tabularnewline
\cline{3-3}\cline{5-5}\cline{7-7}\cline{10-10}\cline{12-12}\cline{14-14}
\multirow{4}{*}{{\footnotesize$N(1)\coloneqq\textrm{diag}(L_{1})N$}} & \multirow{4}{*}{{\footnotesize$=$}} & \foreignlanguage{british}{{\footnotesize 0}} & {\footnotesize 1} & {\footnotesize 1} & \multirow{4}{*}{{\footnotesize$=$}} & {\footnotesize 1} & {\footnotesize 0} & {\footnotesize 0} & {\footnotesize 0} & \multirow{4}{*}{{\footnotesize$\cdot$}} & {\footnotesize 0} & {\footnotesize 1} & {\footnotesize 1} & {\footnotesize$h2o$}\tabularnewline
 &  & {\footnotesize -1} & {\footnotesize -1} & {\footnotesize 0} &  & {\footnotesize 0} & {\footnotesize 1} & {\footnotesize 0} & {\footnotesize 0} &  & {\footnotesize -1} & {\footnotesize -1} & {\footnotesize 0} & {\footnotesize$cit$}\tabularnewline
 &  & {\footnotesize 1} & {\footnotesize 0} & {\footnotesize -1} &  & {\footnotesize 0} & {\footnotesize 0} & {\footnotesize 1} & {\footnotesize 0} &  & {\footnotesize 1} & {\footnotesize 0} & {\footnotesize -1} & {\footnotesize$icit$}\tabularnewline
 &  & {\footnotesize 0} & {\footnotesize 0} & {\footnotesize 0} &  & {\footnotesize 0} & {\footnotesize 0} & {\footnotesize 0} & {\footnotesize 0} &  & {\footnotesize 0} & {\footnotesize 1} & {\footnotesize 1} & {\footnotesize$cisa$}\tabularnewline
\cline{3-3}\cline{5-5}\cline{7-7}\cline{10-10}\cline{12-12}\cline{14-14}
\end{tabular}}{\footnotesize\par}
\par\end{flushleft}

\begin{flushleft}
{\footnotesize\vspace{0.5cm}
}{\footnotesize\par}
\par\end{flushleft}

\begin{flushleft}
{\footnotesize{}%
\begin{tabular}{ll|lll|l|llll|l|lll|c}
 & \multicolumn{1}{l}{} & {\footnotesize$\mathcal{Y}_{1}$} & {\footnotesize$\mathcal{Y}_{2}$} & \multicolumn{1}{l}{{\footnotesize$YR_{3}$}} & \multicolumn{1}{l}{} & {\footnotesize$h2o$} & {\footnotesize$cit$} & {\footnotesize$\mathrm{icit}$} & \multicolumn{1}{l}{{\footnotesize$cisa$}} & \multicolumn{1}{l}{} &  &  & \multicolumn{1}{l}{} & \tabularnewline
\cline{3-3}\cline{5-5}\cline{7-7}\cline{10-10}\cline{12-12}\cline{14-14}
\multirow{4}{*}{{\footnotesize$N(2)\coloneqq\textrm{diag}(L_{2})N$}} & \multirow{4}{*}{{\footnotesize$=$}} & \foreignlanguage{british}{{\footnotesize 0}} & {\footnotesize 0} & {\footnotesize 0} & \multirow{4}{*}{{\footnotesize$=$}} & {\footnotesize 0} & {\footnotesize 0} & {\footnotesize 0} & {\footnotesize 0} & \multirow{4}{*}{{\footnotesize$\cdot$}} & \foreignlanguage{british}{{\footnotesize 0}} & {\footnotesize 1} & {\footnotesize 1} & {\footnotesize$h2o$}\tabularnewline
 &  & {\footnotesize -1} & {\footnotesize -1} & {\footnotesize 0} &  & {\footnotesize 0} & {\footnotesize 1} & {\footnotesize 0} & {\footnotesize 0} &  & {\footnotesize -1} & {\footnotesize -1} & {\footnotesize 0} & {\footnotesize$cit$}\tabularnewline
 &  & {\footnotesize 1} & {\footnotesize 0} & {\footnotesize -1} &  & {\footnotesize 0} & {\footnotesize 0} & {\footnotesize 1} & {\footnotesize 0} &  & {\footnotesize 1} & {\footnotesize 0} & {\footnotesize -1} & {\footnotesize$icit$}\tabularnewline
 &  & {\footnotesize 0} & {\footnotesize 1} & {\footnotesize 1} &  & {\footnotesize 0} & {\footnotesize 0} & {\footnotesize 0} & {\footnotesize 1} &  & {\footnotesize 0} & {\footnotesize 1} & {\footnotesize 1} & {\footnotesize$cisa$}\tabularnewline
\cline{3-3}\cline{5-5}\cline{7-7}\cline{10-10}\cline{12-12}\cline{14-14}
\end{tabular}}{\footnotesize\par}
\par\end{flushleft}

\begin{flushleft}
{\footnotesize\vspace{0.5cm}
}{\footnotesize\par}
\par\end{flushleft}

\begin{flushleft}
{\footnotesize{}%
\begin{tabular}{ll|lll|l}
 & \multicolumn{1}{l}{} & {\scriptsize$\mathcal{Y}_{1}$} & {\scriptsize$\mathcal{Y}_{2}$} & \multicolumn{1}{l}{{\scriptsize$\mathcal{Y}_{3}$}} & \tabularnewline
\cline{3-3}\cline{5-5}
\multirow{4}{*}{{\scriptsize$N=\textrm{diag}^{-1}\left(L^{T}\mathbbm{1}\right)(N(1)+N(2))$}} & \multirow{4}{*}{{\scriptsize$=$}} & \foreignlanguage{british}{{\scriptsize 0}} & {\scriptsize 1} & {\scriptsize 1} & \multirow{4}{*}{{\scriptsize$=$}}\tabularnewline
 &  & {\scriptsize -1} & {\scriptsize -1} & {\scriptsize 0} & \tabularnewline
 &  & {\scriptsize 1} & {\scriptsize 0} & {\scriptsize -1} & \tabularnewline
 &  & {\scriptsize 0} & {\scriptsize 1} & {\scriptsize 1} & \tabularnewline
\cline{3-3}\cline{5-5}
\end{tabular}}{\footnotesize\par}
\par\end{flushleft}

\begin{flushleft}
{\footnotesize\vspace{0.5cm}
}{\footnotesize\par}
\par\end{flushleft}

\begin{flushleft}
{\footnotesize{}%
\begin{tabular}{l|lll|l|llll|l|lll|l|lll|cc}
\multicolumn{1}{l}{} & {\scriptsize$\mathcal{Y}_{1}$} & {\scriptsize$\mathcal{Y}_{2}$} & \multicolumn{1}{l}{{\scriptsize$\mathcal{Y}_{3}$}} & \multicolumn{1}{l}{} & {\scriptsize$h2o$} & {\scriptsize$cit$} & {\scriptsize$\mathrm{icit}$} & \multicolumn{1}{l}{{\scriptsize$cisa$}} & \multicolumn{1}{l}{} &  &  & \multicolumn{1}{l}{} & \multicolumn{1}{l}{} &  &  & \multicolumn{1}{l}{} &  & \tabularnewline
\cline{2-2}\cline{4-4}\cline{6-6}\cline{9-9}\cline{11-11}\cline{13-13}\cline{15-15}\cline{17-17}
 & \foreignlanguage{british}{{\scriptsize 0}} & {\scriptsize 1} & {\scriptsize 1} &  & {\scriptsize 1} & {\scriptsize 0} & {\scriptsize 0} & {\scriptsize 0} &  & \foreignlanguage{british}{{\scriptsize 0}} & {\scriptsize 1} & {\scriptsize 1} & \multirow{4}{*}{{\scriptsize$+$}} & \foreignlanguage{british}{{\scriptsize 0}} & {\scriptsize 0} & {\scriptsize 0} & {\scriptsize$h2o$} & \tabularnewline
\multirow{2}{*}{{\small =}} & {\scriptsize -1} & {\scriptsize -1} & {\scriptsize 0} & \multirow{2}{*}{=} & {\scriptsize 0} & {\scriptsize$\frac{1}{2}$} & {\scriptsize 0} & {\scriptsize 0} & \multirow{2}{*}{{\footnotesize$\cdot$(}} & {\scriptsize -1} & {\scriptsize -1} & {\scriptsize 0} &  & {\scriptsize -1} & {\scriptsize -1} & {\scriptsize 0} & {\scriptsize$cit$} & \multirow{2}{*}{)}\tabularnewline
 & {\scriptsize 1} & {\scriptsize 0} & {\scriptsize -1} &  & {\scriptsize 0} & {\scriptsize 0} & {\scriptsize$\frac{1}{2}$} & {\scriptsize 0} &  & {\scriptsize 1} & {\scriptsize 0} & {\scriptsize -1} &  & {\scriptsize 1} & {\scriptsize 0} & {\scriptsize -1} & {\scriptsize$icit$} & \tabularnewline
 & {\scriptsize 0} & {\scriptsize 1} & {\scriptsize 1} &  & {\scriptsize 0} & {\scriptsize 0} & {\scriptsize 0} & {\scriptsize 1} &  & {\scriptsize 0} & {\scriptsize 0} & {\scriptsize 0} &  & {\scriptsize 0} & {\scriptsize 1} & {\scriptsize 1} & {\scriptsize$cisa$} & \tabularnewline
\cline{2-2}\cline{4-4}\cline{6-6}\cline{9-9}\cline{11-11}\cline{13-13}\cline{15-15}\cline{17-17}
\end{tabular}}{\footnotesize\par}
\par\end{flushleft}

\section{\label{sec:Notation-tables}Notation tables}

\subsubsection{Notation }

Throughout this paper, $\mathbb{R}$, $\mathbb{R}^{n}$, and $\mathbb{R}^{m\times n}$
are the field of real numbers, the vector space of $n$-tuples of
real numbers, and the space of $m\times n$ matrices with entries
in $\mathbb{R}$, respectively. Similarly, $\text{\ensuremath{\mathbb{Z}}}$,
$\mathbb{Z}^{n}$, $\mathbb{Z}^{m\times n}$ are integer numbers,
the vector space of $n$-tuples of integer number, and the space of
matrices with entries in $\mathbb{Z}$, respectively. $N^{T}$ is
the transpose of a matrix $N$ in $\mathbb{R}^{m\times n}$. $\mathbb{Z}_{+}^{n}$
and $\mathbb{Z}_{++}^{n}$ are non-negative integer $n$-tuples and
positive integer $n$-tuples in $\mathbb{Z}^{n}$, respectively. Let
$\mathbbm{1}$ be the vector of all ones. For a matrix $A\in\mathbb{R}^{m\times n}$,
$A_{i}$ and $A_{:j}$ are the $i^{th}$ row and the $j^{th}$ column
of $A$, respectively, where $i\in1,\ldots,m$ and $j\in1,\ldots,n$.
Further, $\left[\,\cdot\thinspace,\cdot\,\right]$ stands for the
horizontal concatenation operator, and $I$ denotes an identity matrix.

A calligraphic, uppercase, Roman letter, e.g., $\mathcal{A}$, denotes
a set, multiset or sequence, with $\{\cdot,\cdot\}$ denoting an unordered
pair, $(\cdot,\cdot)$ denoting an ordered pair and $(\cdot,\ldots,\cdot)$
denoting a sequence. Let $\left|\mathcal{A}\right|$ denote the cardinality
of the set $\mathcal{A}$. A multiset is a modification of the concept
of a set that, unlike a set, allows for multiple instances for each
of its elements. In a multiset $\mathcal{M}\coloneqq(\mathcal{A},f)$
, $\mathcal{A}$ is a set and $f:\mathcal{A}\rightarrow\mathbb{Z}_{+}$
is a function from $\mathcal{A}$ to the set of positive integers
giving the multiplicity of the $i^{th}$ element $\mathcal{A}_{i}$
in the multiset as the number $f(\mathcal{A}_{i})$. In multiset $\{a,a,b\}$,
the element $a$ has multiplicity 2, and $b$ has multiplicity 1.
The cardinality of a multiset is constructed by summing up the multiplicities
of all its elements. The cardinality of sets, multisets and sequences
is all assumed to be finite.

In illustrative examples, all metabolic species and reactions are
annotated with their abbreviated identifier used in the Virtual Metabolic
Human database (\href{http://vmh.life}{http://vmh.life}), e.g., the
$crn$ abbreviation for the molecular species L-carnitine (\href{https://vmh.life/\#metabolite/crn}{crn}).

\begin{table}
\begin{tabular}{|c|c|}
\hline 
Symbol & Name\tabularnewline
\hline 
\hline 
$\mathcal{H}$ & directed stoichiometric hypergraph\tabularnewline
\hline 
$\mathcal{V}$ & molecular species\tabularnewline
\hline 
$\mathcal{Y}\coloneqq\{\mathcal{S}(\mathcal{\mathcal{V}}),\mathcal{P}(\mathcal{V})\}$ & reaction hyperedge\tabularnewline
\hline 
$\mathcal{S}(\mathcal{\mathcal{V}})$ & substrate chemical complex\tabularnewline
\hline 
$\mathcal{P}(\mathcal{V})$ & product chemical complex\tabularnewline
\hline 
$\mathcal{X}$ & vertex (atom)\tabularnewline
\hline 
$\mathcal{B}$ & edge (chemical bond)\tabularnewline
\hline 
$\mathcal{G}(\mathcal{X},\mathcal{B})$ & molecular graph\tabularnewline
\hline 
$\mathcal{C}(\mathcal{V})$ & chemical complex\tabularnewline
\hline 
$\mathcal{E}:=\{\mathcal{X}_{i},\,\mathcal{X}_{j}\}$ & atom transition edge\tabularnewline
\hline 
$\mathcal{G}(\mathcal{X},\mathcal{\mathcal{E}},\mathcal{H}\{\mathcal{S}(\mathcal{\mathcal{V}}),\mathcal{P}(\mathcal{V})\})$ & atom mapping\tabularnewline
\hline 
$\mathcal{T}(\mathcal{X},\mathcal{\mathcal{E}},\mathcal{H})$ & atom transition graph\tabularnewline
\hline 
$\mathcal{L}(\mathcal{X},(\mathcal{B},\mathcal{E}),(\mathcal{V},\mathcal{H}))$ & molecular transition graph\tabularnewline
\hline 
$\bar{\mathcal{B}}$ & a set of conserved bonds\tabularnewline
\hline 
$\hat{\mathcal{B}}$ & a set of reacting bonds\tabularnewline
\hline 
$\bar{\mathcal{L}}$$(\mathcal{X},(\mathcal{\bar{\mathcal{B}}},\mathcal{E}),(\mathcal{V},\mathcal{H})))$ & conserved molecular transition graph\tabularnewline
\hline 
$\hat{\mathcal{\mathcal{L}}}(\mathcal{X},(\mathcal{\hat{\mathcal{B}}},\mathcal{E}),(\mathcal{V},\mathcal{H})))$ & reacting molecular transition graph\tabularnewline
\hline 
$\mathcal{M}(\mathcal{X},\mathcal{\mathcal{E}},\mathcal{H},\mathcal{A})$ & moiety transition graph\tabularnewline
\hline 
$\mathcal{Q}(\mathcal{X},\mathcal{B})$ & conserved moiety molecular graph\tabularnewline
\hline 
$\underline{\mathcal{X}}$ & set of condensed nodes\tabularnewline
\hline 
$\underline{\hat{\mathcal{L}}}$ & condensed reacting molecular transition graph\tabularnewline
\hline 
\end{tabular}

\caption{\label{tab:Graph-theory-notation}Graph theory notation for chemical
network modelling}
\end{table}

\begin{table}
\begin{tabular}{|c|c|c|}
\hline 
Symbol & Name & Dimension\tabularnewline
\hline 
\hline 
$N$ & stoichiometric matrix & $m\times n$\tabularnewline
\hline 
$F$ & forward stoichiometric matrix & $m\times n$\tabularnewline
\hline 
$R$ & reverse stoichiometric matrix & $m\times n$\tabularnewline
\hline 
$B$ & molecular graph incidence matrix & $p\times q$\tabularnewline
\hline 
$w$ & weight vector & $q\times1$\tabularnewline
\hline 
$S$ & substrate matrix & $p\times max(q,q'$\tabularnewline
\hline 
$P$ & product matrix & $p\times max(q,q')$\tabularnewline
\hline 
$w_{s}$ & substrate weight vector & $max(q,q')\times1$\tabularnewline
\hline 
$w_{p}$ & product weight vector & $max(q,q')\times1$\tabularnewline
\hline 
$D$ & reaction matrix & $p\times max(q,q')$\tabularnewline
\hline 
$T$ & incidence matrix of an atom transition graph & $p\times t$\tabularnewline
\hline 
$V$ & matrix that maps each molecular species to each atom & $m\times p$\tabularnewline
\hline 
$E$ & matrix that maps each directed atom transitions to each reaction & $t\times n$\tabularnewline
\hline 
$A$ & incidence matrix of a molecular transition graph & $p\times q$\tabularnewline
\hline 
$C$ & mapping between connected components & $c\times p$\tabularnewline
\hline 
$P$ & permutation matrix & \tabularnewline
\hline 
$H$ & mapping between isomorphism classes & $\left|\mathcal{I}\right|\times c$\tabularnewline
\hline 
$M$ & incidence matrix of the moiety transition graph & $u\times v$\tabularnewline
\hline 
$L$ & conserved moiety matrix & $(m-r)\times m$\tabularnewline
\hline 
$\bar{\underline{A}}$ & incidence matrix of the condensed conserved graph & \tabularnewline
\hline 
\end{tabular}

\caption{\label{tab:Matrix-notations-for}Matrix notations for chemical network
modelling}
\end{table}

\begin{center}
\begin{table}
\begin{tabular}{|c|l|}
\hline 
Symbol & Name\tabularnewline
\hline 
\hline 
$m$ & number of metabolites\tabularnewline
\hline 
$n$ & number of reactions\tabularnewline
\hline 
$p$ & number of nodes\tabularnewline
\hline 
$q$ & number of edges\tabularnewline
\hline 
$d$ & number of reacting bonds\tabularnewline
\hline 
$p$($\mathcal{V}_{k}$) & cardinality of atoms of a molecular species\tabularnewline
\hline 
$q$($\mathcal{V}_{k}$) & cardinality of bonds of a molecular species\tabularnewline
\hline 
$\alpha(\mathcal{V}_{k})$ & atomic cardinality of molecular species\tabularnewline
\hline 
$t\coloneqq\left|\mathcal{E}\right|$ & number of atom mappings\tabularnewline
\hline 
$\mathcal{I}$ & number of maximal isomorphism classes\tabularnewline
\hline 
$u$ & number of conserved moiety instances\tabularnewline
\hline 
$v$ & number of conserved moiety transitions\tabularnewline
\hline 
$s\coloneqq\mid\mathcal{\mathcal{Z}}\mid$ & number of bond mappings\tabularnewline
\hline 
\end{tabular}\caption{\label{tab:Variable-notations-for}Variable notations for chemical
network modelling}
\end{table}
\par\end{center}
\end{document}